\DeclareMathAlphabet{\mathlcal}{U}{dutchcal}{m}{n}
\newtheorem{theorem}{Theorem}
\newtheorem{corollary}{Corollary}
\theoremstyle{remark}
\newtheorem{remark}{Remark}
\newcommand{\ham}{\mathcal{H}}
\newcommand{\rme}{\mathrm{e}}
\begin{document}

\preprint{APS/123-QED}

\title{Quadratically Shallow Quantum Circuits for Hamiltonian Functions}

\author{Youngjun Park}
\affiliation{Department of Chemistry, Yonsei University, Seoul 03722, Republic of Korea}

\author{Minhyeok Kang}
\affiliation{SKKU Advanced Institute of Nanotechnology (SAINT), Sungkyunkwan University, Suwon 16419, Republic of Korea}

\author{Chae-Yeun Park}
\affiliation{School of Integrated Technology, Yonsei University, Seoul 03722, Republic of Korea}
\affiliation{Department of Quantum Information, Yonsei University, Incheon 21983, Republic of Korea}

\author{Joonsuk Huh}
\email{joonsukhuh@yonsei.ac.kr}
\affiliation{Department of Chemistry, Yonsei University, Seoul 03722, Republic of Korea}
\affiliation{Department of Quantum Information, Yonsei University, Incheon 21983, Republic of Korea}

\begin{abstract}
    Many quantum algorithms for ground-state preparation and energy estimation require the implementation of high-degree polynomials of a Hamiltonian to achieve better convergence rates.
    Their circuit implementation typically relies on quantum signal processing (QSP), whose circuit depth is proportional to the degree of the polynomial.
    Previous studies exploit the Chebyshev polynomial approximation, which requires a Chebyshev series of degree \(O(\sqrt{n \ln(1/\delta)})\) for an \(n\)-degree polynomial, where \(\delta\) is the approximation error.
    However, the approximation is limited to only a few functions, including monomials, truncated exponential, Gaussian, and error functions.
    In this work, we present the most generalized function approximation methods for \(\delta\)-approximating linear combinations or products of polynomial-approximable functions with quadratically reduced-degree polynomials.
    We extend the list of polynomial-approximable functions by showing that the functions of cosine and sine can also be \(\delta\)-approximated by quadratically reduced-degree Laurent polynomials.
    We demonstrate that various Hamiltonian functions for quantum ground-state preparation and energy estimation can be implemented with quadratically shallow circuits.
\end{abstract}

\maketitle

\section{Introduction}
Efficient estimation of the ground state and its energy for understanding the physical or chemical properties of quantum systems has been extensively studied in quantum computation \cite{abrams1999quantum,lidar1997simulating,lloyd1996universal,aspuru2005simulated,abrams1997simulation,motta2020determining,aulicino2022state,he2022quantum,ge2019faster,xie2022power,kirby2023exact,seki2021quantum,bespalova2021hamiltonian,epperly2022theory,fomichev2024initial,lin2022heisenberg}.
Many proposed quantum algorithms for ground-state preparation and energy estimation demand synthesizing a polynomial of a Hamiltonian \(\ham\), \(P(\ham)\).
For example, quantum Krylov subspace methods aim to construct a minimax polynomial filter near the ground state energy \cite{epperly2022theory},
and \(\cos^n \ham\) filtering projects an initial state to the ground state \cite{ge2019faster}.
As their performance depends on the overlap between an initial state and the ground state, preparing a good initial state with high-degree polynomial filtering is critical \cite{lin2022heisenberg}.

A higher-degree polynomial of \(\ham\) provides a better convergence to the ground state and its energy.
Its quantum circuit implementation using quantum signal processing (QSP) \cite{low2016methodology,low2017optimal}
requires a deep circuit whose depth is proportional to the degree of \(P(\ham)\), \(\deg(P)\).
In previous works \cite{motta2020determining,aulicino2022state,he2022quantum,ge2019faster,xie2022power,motta2024subspace,kirby2023exact,zhang2024measurement,seki2021quantum,bespalova2021hamiltonian,kyriienko2020quantum}, target Hamiltonian functions are approximated using the Fourier transform or the Fourier approximation using a finite-difference method. However, the Fourier transform involves discretization and truncation errors, and the Fourier approximation using a finite-difference method often exhibits poor convergence rates.

Another approach \cite{gilyen2019quantum,low2017hamiltonian,tosta2024randomized,chakraborty2024implementing,martyn2021grand} uses Chebyshev polynomial approximation \cite{sachdeva2014faster} to reduce \(\deg(P)\).
The polynomial approximation is an optimal \(\delta\)-approximation to monomials on \([-1,1]\) and exponential functions \(\rme^{-x}\) on \([0,b]\) \cite[Chapter 5]{sachdeva2014faster},
which requires a quadratically reduced-degree Chebyshev series.
It can also be used for approximating Gaussian and error functions \cite{low2017hamiltonian}.
However, the aforementioned functions are the only cases known to admit quadratically reduced-degree approximations with a uniform error \(\delta\) on \([-1,1]\).

Generalized quantum signal processing (GQSP) \cite{de2022fourier,motlagh2024generalized,haah2019product} introduces an alternative scheme to QSP by adopting SU(2) ancillary rotations and Hamiltonian simulations controlled on the ancilla qubit.
A polynomial \(P(\rme^{i\ham})\) or a Laurent polynomial \(L(\rme^{i\ham})\) can be implemented by GQSP, requiring \(O(\deg (P))\) or \(O(\deg(L))\) controlled Hamiltonian simulations.
As the Suzuki-Trotter decomposition enables ancilla-free Hamiltonian simulations, GQSP stands out as a promising framework for implementing Hamiltonian functions.
However, efficient function approximation theory to \(P(\rme^{i\ham})\) and \(L(\rme^{i\ham})\) is notably underdeveloped.

In this work, we develop the most generalized function approximation methods that require only quadratically reduced-degree polynomials for linear combinations or products of polynomial-approximable functions.
The approximations are general in the sense that target functions are defined in an extended domain \([-1,1] \times \mathbb{R}\).
We refer to the functions as \textit{polynomial-approximable functions} when either the Chebyshev polynomial approximation or the \textit{Laurent polynomial approximation} is possible.
We newly formulate the Laurent polynomial approximation, which extends the Chebyshev polynomial approximation to Laurent polynomial representations.
This work enables the synthesis of various Hamiltonian functions using quadratically shallow circuits, without altering the total error scaling.

The remainder of the paper is structured as follows.
In Section~\ref{sec:poly_approx}, we review the Chebyshev polynomial approximation \cite{sachdeva2014faster,low2017hamiltonian}, and then present our novel Laurent polynomial approximation.
Table~\ref{tab:poly_approx} summarizes Section~\ref{sec:poly_approx}, and lists the polynomial-approximable functions.
In Section~\ref{sec:lc_prod}, we establish the main contributions of this work: 
Theorems~\ref{thm:linear_comb} and~\ref{thm:product} show that linear combinations and products of the polynomial-approximable functions can be approximated by quadratically reduced-degree polynomials, respectively.
The theorems are proven in Appendix~\ref{appendix:lc_prod_proof}.
In Section~\ref{sec:synthesizing}, we apply the theorems to various Hamiltonian functions.
We achieve a quadratic reduction in the depths of the QSP circuit in the dependence on \(\deg(P)\): from \(O(\deg(P))\) to \(O\bigl(\sqrt{\deg(P) \ln(1/\delta)}\bigr)\).
Section~\ref{sec:synthesizing} is summarized in Table~\ref{tab:hamil_function}, which shows that a broad class of Hamiltonian functions can be efficiently synthesized, including those potentially useful in quantum ground-state preparation and energy estimation.
Section~\ref{sec:conclusion} concludes the paper.
Appendices~\ref{appendix:proof}--\ref{appendix:analysis_gqsp_circ} provide supplementary reviews and details.

\section{The Chebyshev and Laurent Polynomial Approximation}
\label{sec:poly_approx}
\begin{table*}[t]
    \begin{tabular}{|c|c||c|c|c||c|}
        \hline
        Function & (Trunc.) Degree & Degree of Approx. & Approx. Formula & Error & Ref. 
        \\ \hline
        \(x^n\), \(x \in [-1,1]\) & \(n\) & \(O\bigl(\sqrt{n \ln(1/\delta)}\bigr)\) & Eq.~\eqref{eq:mono_approx} 
        & \(\delta > 0\) & \cite{sachdeva2014faster}
        \\ \hline
        \(\exp(-\beta (1+x)), \beta > 0\) & \(t_{\beta}\) & \(O\bigl(\sqrt{t_{\beta}\ln(1/\delta)}\bigr)\) 
        & Eq.~\eqref{eq:exp_approx} & \(\delta \in (0, 1/2]\) & \cite{sachdeva2014faster} 
        \\ \hline
        \(\exp(-(\gamma x)^2), \gamma \geq 0\) & \(t_{\gamma}\) & \(O\bigl(\sqrt{t_{\gamma} \ln(1/\delta)}\bigr)\) 
        & Eq.~\eqref{eq:gauss_approx}  & \(\delta \in (0,1/2]\) & \cite{low2017hamiltonian}
        \\ \hline
        \(\erf(\lambda x), \lambda > 0\) & \(t_{\lambda}\) 
        & \(O(\sqrt{t_{\lambda} \ln(1/\delta)})\) & Eq.~\eqref{eq:erf_approx} & \(\delta \in (0,O(1)]\) 
        & \cite{low2017hamiltonian}
        \\ \hline
        \(\cos^n y\), \(\sin^n y\), \(y \in \mathbb{R}\) & \(n\) 
        & \(O\bigl(\sqrt{n \ln(1/\delta)}\bigr)\) & Eqs.~\eqref{eq:mono_cos_approx} and \eqref{eq:mono_sin_approx}
        & \(\delta > 0\) & \multirow{4}{*}{This paper}
        \\ \cline{1-5}
        \(\exp(-\beta(1+\cos y))\), \(\exp(-\beta(1+\sin y))\) & \(t_{\beta}\) 
        & \(O\bigl(\sqrt{t_{\beta}\ln(1/\delta)}\bigr)\) & Eqs.~\eqref{eq:exp_cos_approx} and \eqref{eq:exp_sin_approx} 
        & \(\delta \in (0, 1/2]\) & 
        \\ \cline{1-5}
        \(\exp(-(\gamma \cos y)^2)\), \(\exp(-(\gamma \sin y)^2)\) & \(t_{\gamma}\) 
        & \(O\bigl(\sqrt{t_{\gamma} \ln(1/\delta)}\bigr)\) & Eqs.~\eqref{eq:gauss_cos_approx} and \eqref{eq:gauss_sin_approx}
        & \(\delta \in (0,1/2]\) & 
        \\ \cline{1-5}
        \(\erf(\lambda \cos y)\), \(\erf(\lambda \sin y)\) & \(t_{\lambda}\) 
        & \(O(\sqrt{t_{\lambda} \ln(1/\delta)})\) & Eqs.~\eqref{eq:erf_cos_approx} and \eqref{eq:erf_sin_approx} 
        & \(\delta \in (0,O(1)]\) & 
        \\ \hline
    \end{tabular}
    \caption{
        The list of polynomial-approximable functions and their approximations (Section~\ref{sec:poly_approx} and Appendix~\ref{appendix:sine_ver}).
        The second column shows either the degrees (\(n\)) of the monomials or the truncation degrees (\(t_\beta,t_\lambda\), and \(t_\gamma\)) of the exponential functions.
        The fifth column gives the approximation errors.
        The degree of a Laurent polynomial is defined as the highest power of \(z\).
    }
    \label{tab:poly_approx}
\end{table*}

In this section, we first review the Chebyshev polynomial approximation to monomials, exponential, Gaussian, and error functions \cite{sachdeva2014faster,low2017hamiltonian}.
For completeness, the theorems, corollaries, and their proofs in Refs.~\cite{sachdeva2014faster,low2017hamiltonian} are also given in Appendix~\ref{appendix:proof}.
We then derive our new Laurent polynomial approximation from the Chebyshev polynomial approximation.

A monomial \(x^n\) on \([-1,1]\) can be efficiently approximated by the following Chebyshev expansion \(p_{n,d}(x)\) with an error \(\delta\) \cite[Theorem 3.3]{sachdeva2014faster} (see Theorem~\ref{thm:monomial_approx}):
\begin{equation}
    x^n \approx p_{n,d}(x) \coloneqq \sum_{j=0}^{d} c_{n,j} T_j(x),
    \label{eq:mono_approx}
\end{equation}
where \(d = O\bigl(\sqrt{n \ln(1/\delta)}\bigr)\), and \(T_j(x) \coloneqq \cos(j \arccos x)\) denotes the Chebyshev polynomials of the first kind. 
Here, the coefficients \(c_{n,j}\) \cite{aulicino2022state,mason2002chebyshev} are defined as
\begin{equation}
    c_{n,j} =
    \begin{cases}
        \frac{1}{2^{n-1}} \binom{n}{(n-j)/2}, & \text{if } j \neq 0 \text{ and } n - j \text{ even}, \\
        \frac{1}{2^n} \binom{n}{n/2}, & \text{if } j = 0 \text{ and } n \text{ even}, \\
        0, & \text{otherwise,}
    \end{cases}
    \label{eq:mono_coeff}
\end{equation}
where the factor \(1/2\) is incorporated into \(c_{n,0}\), and \(c_{n,j} = 0\) for unmatched parities.
This efficient approximation arises from the equioscillation property of the Chebyshev polynomials \cite{sachdeva2014faster,mason2002chebyshev,borel1905leccons}, and
is used for approximating exponential, Gaussian, and error functions.

By approximating each monomial \(x^k\) in the Maclaurin series of \(\rme^{-\beta(1+x)}\) 
(\(\beta > 0\)) by \(p_{k,d}(x)\),
\(q_{\beta,t_{\beta},d}(x)\) can efficiently approximate \(\rme^{-\beta(1+x)}\) with an error \(\delta\):
\begin{equation}
    \rme^{-\beta(1+x)} \approx 
    q_{\beta,t_{\beta},d}(x) \coloneqq \rme^{-\beta} \sum_{k=0}^{t_{\beta}} \frac{(-\beta)^k}{k!} p_{k,d}(x),
    \label{eq:exp_approx}
\end{equation}
where the truncation degree is \(t_{\beta} = O(\beta + \ln(1/\delta))\) \cite{low2017hamiltonian}, and \(d = O\bigl( \sqrt{t_{\beta} \ln(1/\delta)} \bigr)\) \cite[Lemma 4.2]{sachdeva2014faster} (see Theorem~\ref{col:exp_approx}).

Since \(\rme^{-(\gamma x)^2} = \rme^{-\frac{\gamma^2}{2}(T_2(x) + 1)}\) (\(\gamma \geq 0\)),
we can substitute \(\beta \mapsto \gamma^2/2\) and \(x \mapsto T_2(x)\) in Eq.~\eqref{eq:exp_approx} \cite{low2017hamiltonian}.
Using the identity \(T_j(T_2(x)) = T_{2j}(x)\),
we obtain the following Chebyshev polynomial approximation to the Gaussian function with an error \(\delta\) \cite[Corollary 3]{low2017hamiltonian} (see Corollary~\ref{col:gauss_approx}):
\begin{equation}
    \rme^{-(\gamma x)^2} \approx 
    \rme^{-\gamma^2/2} \sum_{k=0}^{t_{\gamma}} \frac{(-\gamma^2/2)^k}{k!} 
    \sum_{j=0}^{d} c_{k,j} T_{2j}(x),
    \label{eq:gauss_approx}
\end{equation}
where \(t_{\gamma} = O(\gamma^2 + \ln(1/\delta))\), and \(d = O\bigl( \sqrt{t_{\gamma} \ln(1/\delta)}\bigr)\).

The error function is defined as: \(\erf(\lambda x) = (2\lambda/\sqrt{\pi})\int_{0}^{x} \rme^{-(\lambda u)^2} \mathrm{d}u\), where \(\lambda > 0\).
Applying Eq.~\eqref{eq:gauss_approx} to the definition of the error function yields the following \(\delta\)-approximation \cite[Corollary 4]{low2017hamiltonian} (see Corollary~\ref{col:erf_approx}):
\begin{equation}
    \erf(\lambda x)
    \approx \frac{2\lambda \rme^{-\frac{\lambda^2}{2}}}{\sqrt{\pi}} \sum_{k=0}^{t_{\lambda}} \frac{(-\lambda^2/2)^k}{k!} \int_{0}^{x} p_{k,d}(T_2(u)) \mathrm{d}u,
    \label{eq:erf_approx}
\end{equation}
where \(t_{\lambda} = O(\lambda^2 + \ln(1/\delta))\), and \(d = O\bigl( \sqrt{t_{\lambda} \ln(1/\delta)}\bigr)\).
The integral \(\int_{0}^{x} p_{k,d}(T_2(u)) \mathrm{d}u\) evaluates to
\begin{equation}
    \sum_{j=0}^{d} c_{k,j} \left(\frac{T_{2j+1}(x)}{2(2j+1)} - \frac{T_{|2j-1|}(x)}{2(2j-1)} \right),
    \label{eq:erf_approx_int}
\end{equation}
where \(T_{|-1|}(x) \coloneqq T_{1}(x)\).

Now, we present our Laurent polynomial approximation.
We extend the Chebyshev polynomial approximation to Laurent polynomial representations in the variable \(z \coloneqq \rme^{iy}\) for \(y \in \mathbb{R}\).
We only consider the Laurent polynomials of degree \(l\) of the form \(\sum_{k=-l}^{l} c_k z^k\), where \(c_{\pm l} \neq 0\).
Substituting \(x \mapsto \cos y\) into Eq.~\eqref{eq:mono_approx} yields a Laurent polynomial of degree \(d = O(\sqrt{n \ln(1/\delta)})\), which is a \(\delta\)-approximation to \(\cos^n y\):
\begin{equation}
    \cos^n y \approx p_{n,d}(\cos y) = \sum_{j=0}^{d} \frac{c_{n,j}}{2} (z^j + z^{-j}).
    \label{eq:mono_cos_approx}
\end{equation}
Eq.~\eqref{eq:mono_cos_approx} is a useful approximation as \(\cos^n y\) serves as a window function for filtering out unnecessary signals.
Note that our derivation provides a more practical degree \(d\) compared to Ge et al.'s~\cite{ge2019faster}.
They expressed the order \(d\) in terms of parameters usually unknown in advance, including a Hamiltonian's spectral gap.

We use Eq.~\eqref{eq:mono_cos_approx} for approximating the following various functions.
The Laurent polynomial \(p_{n,d}(\cos y)\) can be used for approximating \(\rme^{-\beta(1+\cos y)}\) with an error \(\delta\) as follows:
\begin{equation}
    \rme^{-\beta(1+\cos y)} \approx \rme^{-\beta} \sum_{k=0}^{t_{\beta}} \frac{(-\beta)^k}{k!} p_{k,d}(\cos y),
    \label{eq:exp_cos_approx}
\end{equation}
where \(t_{\beta} = O(\beta + \ln(1/\delta))\), and \(d = O\bigl( \sqrt{t_{\beta} \ln(1/\delta)} \bigr)\).

Similarly, a \(\delta\)-approximation to \(\rme^{-(\gamma \cos y)^2}\) is
\begin{equation}
    \rme^{-(\gamma \cos y)^2} \approx \rme^{-\gamma^2/2} \sum_{k=0}^{t_{\gamma}} \frac{(-\gamma^2/2)^k}{k!} \sum_{j=0}^{d} \frac{c_{k,j}}{2} (z^{2j} + z^{-2j}),
    \label{eq:gauss_cos_approx}
\end{equation}
where \(t_{\gamma} = O(\gamma^2 + \ln(1/\delta))\), and \(d = O\bigl( \sqrt{t_{\gamma} \ln(1/\delta)} \bigr)\).

We also obtain a \(\delta\)-approximation to \(\erf(\lambda \cos y)\):
\begin{equation}
    \frac{2\lambda \rme^{-(\lambda/2)^2}}{\sqrt{\pi}} \sum_{k=0}^{t_{\lambda}} \frac{(-(\lambda/2)^2)^k}{k!} \int_{0}^{\cos y} p_{k,d}(T_2(u)) \mathrm{d}u,
    \label{eq:erf_cos_approx}
\end{equation}
with \(t_{\lambda} = O(\lambda^2 + \ln(1/\delta))\), \(d = O\bigl( \sqrt{t_{\lambda} \ln(1/\delta)}\bigr)\), 
and the integral \(\int_{0}^{\cos y} p_{k,d}(T_2(u)) \mathrm{d}u\) simplifies to
\begin{equation}
    \sum_{j=0}^{d} \frac{c_{k,j}}{2} \left(\frac{z^{2j+1} + z^{-2j-1}}{2(2j+1)} - \frac{z^{|2j-1|} + z^{-|2j-1|}}{2(2j-1)} \right).
    \label{eq:erf_cos_approx_int}
\end{equation}

The approximations to \(\sin^n y\), \(\rme^{-\beta (1+\sin y)}\), \(\rme^{-(\gamma \sin y)^2}\), and \(\erf(\lambda \sin y)\) can be directly obtained using the identity \(\sin y = \cos (\pi/2 - y)\). Their explicit formulas are provided in Appendix~\ref{appendix:sine_ver}.
We collectively refer to the approximations to the functions of cosine and sine as Laurent polynomial approximations.
Table~\ref{tab:poly_approx} summarizes this section, providing the full list of the polynomial-approximable functions in this paper.

\section{Efficient Approximations to a Linear Combination and a Product of Polynomial-approximable Functions}
\label{sec:lc_prod}
In Section~\ref{sec:poly_approx}, we reviewed and introduced the polynomial approximations.
However, they are limited to the specific functions listed in Table~\ref{tab:poly_approx}.
In this section, we present the main approximation theorems of this work: Theorems~\ref{thm:linear_comb} and \ref{thm:product}.
By constructing linear combinations or products of polynomial-approximable functions, we dramatically extend the scope of the polynomial approximations presented in Section~\ref{sec:poly_approx}.

We begin by introducing the necessary notations used in Theorems~\ref{thm:linear_comb} and \ref{thm:product}.
\(\mathcal{F}_{\mathrm{mon}} \coloneqq \left\{ x^n \mid n \in \mathbb{N} \right\}\) stands for the set of all monomials.
Similarly, we define the following function classes:
\(\mathcal{F}_{\mathrm{exp}} \coloneqq \{e^{-\beta(1+x)} \mid \beta > 0\}\),
\(\mathcal{F}_{\mathrm{gauss}} \coloneqq \{e^{-(\gamma x)^2} \mid \gamma \geq 0\}\), and
\(\mathcal{F}_{\mathrm{erf}} \coloneqq \{\erf(\lambda x) \mid \lambda >0\}\).
We use \(\mathcal{F}_{i}^{\mathrm{(cos)}}\) and \(\mathcal{F}_{i}^{\mathrm{(sin)}}\) to denote the cosine and sine versions of \(\mathcal{F}_i\), respectively.
For example, \(\mathcal{F}_{\mathrm{mon}}^{\mathrm{(cos)}} \coloneqq \left\{ \cos^n y \mid n \in \mathbb{N} \right\}\).
\(\mathcal{F}_{\mathrm{all}} \) denotes the set of all functions for which Chebyshev polynomial approximations on the interval \([-1,1]\) are viable:
\begin{equation}
    \mathcal{F}_{\mathrm{all}} := \mathcal{F}_{\mathrm{mon}} \cup \mathcal{F}_{\mathrm{exp}} \cup \mathcal{F}_{\mathrm{gauss}} \cup \mathcal{F}_{\mathrm{erf}}.
    \label{eq:f_all}
\end{equation}
We write \(\mathcal{G}_{\mathrm{all}}\) for the set of all functions for which Laurent polynomial approximations are possible:
\begin{equation}
    \mathcal{G}_{\mathrm{all}} := \bigcup_{i \in \{\mathrm{mon},\, \mathrm{exp},\, \mathrm{gauss},\, \mathrm{erf} \}} \left( \mathcal{F}_i^{\mathrm{(cos)}} \cup \mathcal{F}_i^{\mathrm{(sin)}} \right).
    \label{eq:g_all}
\end{equation}
The elements of the sets \(\mathcal{F}_{\mathrm{all}}\) and \(\mathcal{G}_{\mathrm{all}}\) are the polynomial-approximable functions.
Note that each function \(g_j(y) \in \mathcal{G}_{\mathrm{all}}\) is represented by a Laurent polynomial in the variable \(z\).
For a function formed as a sum or a product of a degree-\(d_1\) polynomial and a degree-\(d_2\) Laurent polynomial, 
we define the function's degree as the tuple: \((d_1,d_2)\).

We now present our two approximation theorems: one for a linear combination (Theorem~\ref{thm:linear_comb}) and the other for a product (Theorem~\ref{thm:product}) of the polynomial-approximable functions.
Their proofs are provided in Appendix~\ref{appendix:lc_prod_proof}.
\begin{theorem}
    \label{thm:linear_comb}
    Let \(F: [-1,1] \times \mathbb{R} \rightarrow \mathbb{R}\) be a linear combination of elements 
    \(f_i \in \mathcal{F}_{\mathrm{all}}\) and \(g_j \in \mathcal{G}_{\mathrm{all}}\), i.e.,
    \begin{equation}
        F(x,y) = \sum_{i=1}^{\mathcal{N}} a_i f_i(x) + \sum_{j=1}^{\mathcal{M}} b_j g_j(y),
    \end{equation}
    for real coefficients \(a_i, b_j\), where \(\mathcal{N} \geq 0\) and \(\mathcal{M} \geq 0\) represent the constant number of functions.
    Let \(h_j(z)\) denote the Laurent polynomial representation of each \(g_j(y)\).
    The degree of \(F\) is a tuple \((n_l,m_l)\), where
    \begin{equation}
        n_l \coloneqq \max_{1 \leq i \leq \mathcal{N}} \deg(f_i), \quad m_l \coloneqq \max_{1 \leq j \leq \mathcal{M}} \deg(h_j).
        \label{eq:thm_lc_degree}
    \end{equation}
    If the coefficients \(a_i\) and \(b_j\) satisfy the following sufficient condition for a constant \(\mathcal{C}\):
    \begin{equation}
        \sum_{i=1}^{\mathcal{N}} |a_i| + \sum_{j=1}^{\mathcal{M}} |b_j| = \mathcal{C},
        \label{eq:approx_condition}
    \end{equation}
    then, for any \(\delta \in (0, O(1)]\),
    the function \(F\) can be \(\delta\)-approximated by a polynomial of degree
    \(\bigl(O\bigl(\sqrt{n_l \ln(1/\delta)}\bigr), O\bigl(\sqrt{m_l \ln(1/\delta)}\bigr)\bigr)\). 
\end{theorem}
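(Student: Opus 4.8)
The plan is to bound each summand of $F$ separately by its own (Laurent) polynomial approximation with a tightened error budget, and then reassemble the pieces. First I would set a per-term target error $\delta' := \delta/\mathcal{C}$, shrinking it by at most a constant factor if necessary so that it lies inside the admissible range demanded by the results of Section~\ref{sec:poly_approx} (e.g.\ $(0,1/2]$ for the exponential and Gaussian cases, $(0,O(1)]$ for the error-function cases). The crucial point is that $\mathcal{C}$, $\mathcal{N}$, $\mathcal{M}$, and the individual function parameters ($\beta,\gamma,\lambda$) are all constants, so $\ln(1/\delta') = \ln\mathcal{C} + \ln(1/\delta) = O(\ln(1/\delta))$, and hence every $\sqrt{\,\cdot\,}$-type degree bound obtained at error $\delta'$ is unchanged, up to constants, from the one at error $\delta$.

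Next I would invoke, for each $f_i \in \mathcal{F}_{\mathrm{all}}$, the matching Chebyshev approximation — Theorem~\ref{thm:monomial_approx}, Theorem~\ref{col:exp_approx}, Corollary~\ref{col:gauss_approx}, or Corollary~\ref{col:erf_approx}, i.e.\ Eqs.~\eqref{eq:mono_approx}, \eqref{eq:exp_approx}, \eqref{eq:gauss_approx}, \eqref{eq:erf_approx} — to get a polynomial $\tilde f_i(x)$ with $\sup_{x\in[-1,1]} |f_i(x)-\tilde f_i(x)| \le \delta'$ and $\deg \tilde f_i = O(\sqrt{\deg(f_i)\ln(1/\delta')})$; since $\deg(f_i) \le n_l$ and the square root is monotone, this is $O(\sqrt{n_l\ln(1/\delta)})$. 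Symmetrically, for each $g_j\in\mathcal{G}_{\mathrm{all}}$ I would use the Laurent polynomial approximations of Eqs.~\eqref{eq:mono_cos_approx}, \eqref{eq:exp_cos_approx}, \eqref{eq:gauss_cos_approx}, \eqref{eq:erf_cos_approx} (or their sine analogues in Appendix~\ref{appendix:sine_ver}) to obtain a Laurent polynomial $\tilde h_j(z)$ of degree $O(\sqrt{m_l\ln(1/\delta)})$ with the same uniform error bound $\delta'$. Setting $\tilde F(x,y) := \sum_i a_i\tilde f_i(x) + \sum_j b_j\tilde h_j(z)$, which is real-valued since each $\tilde h_j$ is a sum of terms $z^k+z^{-k}$, and using that the degree of a (Laurent) polynomial sum is at most the largest summand degree, $\tilde F$ has degree tuple $\bigl(O(\sqrt{n_l\ln(1/\delta)}),\,O(\sqrt{m_l\ln(1/\delta)})\bigr)$. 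The error is then controlled by the triangle inequality: $|F-\tilde F| \le \sum_i|a_i|\,|f_i-\tilde f_i| + \sum_j|b_j|\,|g_j-\tilde h_j| \le \bigl(\sum_i|a_i|+\sum_j|b_j|\bigr)\delta' = \mathcal{C}\delta' = \delta$, uniformly over $(x,y)\in[-1,1]\times\mathbb{R}$.

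I do not expect a genuine obstacle; the argument is essentially a union bound over a constant number of terms. The one point that needs care is the bookkeeping of the admissible error windows: the exponential-, Gaussian-, and error-function approximations are only guaranteed for $\delta'$ below a fixed constant, so I must check that the choice $\delta' = \delta/\mathcal{C}$ — capped, if necessary, by that constant, which still leaves $\mathcal{C}\delta'\le\delta$ — is compatible with the stated hypothesis $\delta\in(0,O(1)]$. Since every such cap, every coefficient, and every function parameter is a constant, each adjustment contributes only a constant factor and is absorbed into the $O(\cdot)$, yielding the claimed degree $\bigl(O(\sqrt{n_l\ln(1/\delta)}),\,O(\sqrt{m_l\ln(1/\delta)})\bigr)$.
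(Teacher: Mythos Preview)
Your proposal is correct and follows essentially the same approach as the paper: approximate each $f_i$ and $g_j$ individually with per-term error $\delta/\mathcal{C}$, absorb the constant $\mathcal{C}$ into the logarithm, and combine via the triangle inequality. The paper makes an (unnecessary) case split between $\mathcal{C}\le 1$ and $\mathcal{C}>1$, whereas you handle both at once and are more explicit about the admissible error windows, but the substance is identical.
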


\begin{theorem}
    \label{thm:product}
    Let \(G: [-1,1] \times \mathbb{R} \rightarrow \mathbb{R}\) be a product of elements
    \(f_i \in \mathcal{F}_{\mathrm{all}}\) and \(g_j \in \mathcal{G}_{\mathrm{all}}\), i.e.,
    \begin{equation}
        G(x,y) = \prod_{i=1}^{\mathcal{N}} f_i(x) \prod_{j=1}^{\mathcal{M}} g_j(y),
    \end{equation}
    where \(\mathcal{N} \geq 0\) and \(\mathcal{M} \geq 0\) stand for the constant number of functions.
    Let \(h_j(z)\) denote the Laurent polynomial representation of each \(g_j(y)\).
    The degree of \(G\) is the tuple \((n_p,m_p)\), given by the sums 
    \begin{equation}
        n_p \coloneqq \sum_{i=1}^{\mathcal{N}} \deg(f_i), \quad m_p \coloneqq \sum_{j=1}^{\mathcal{M}} \deg(h_j).
        \label{eq:thm_prod_degree}
    \end{equation}
    Then, for any \(\delta \in (0, O(1)]\), 
    the function \(G\) can be \(\delta\)-approximated by a polynomial of degree
    \(\bigl(O\bigl(\sqrt{n_p \ln(1/\delta)}\bigr), O\bigl(\sqrt{m_p \ln(1/\delta)}\bigr)\bigr)\).
\end{theorem}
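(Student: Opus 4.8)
\emph{Proof strategy for Theorem~\ref{thm:product}.} The plan is to approximate each factor on its own with a shared per-factor error budget \(\epsilon\), multiply the approximants, and then control the accumulated error by a telescoping (hybrid) argument. Concretely, for each \(f_i\in\mathcal F_{\mathrm{all}}\) I would invoke the matching Chebyshev approximation from Section~\ref{sec:poly_approx} and Table~\ref{tab:poly_approx} with \(\delta\mapsto\epsilon\) to obtain a polynomial \(\tilde f_i(x)\) with \(\|f_i-\tilde f_i\|_\infty\le\epsilon\) and \(\deg(\tilde f_i)=O\bigl(\sqrt{\deg(f_i)\ln(1/\epsilon)}\bigr)\), where ``\(\deg\)'' is the notion used there (the monomial degree, or the truncation degree for the exponential-type functions); likewise, for each \(g_j\in\mathcal G_{\mathrm{all}}\) I would take the Laurent \(\epsilon\)-approximant \(\tilde h_j(z)\) with \(\deg(\tilde h_j)=O\bigl(\sqrt{\deg(h_j)\ln(1/\epsilon)}\bigr)\). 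The candidate approximant is
\begin{equation}
    \tilde G(x,z)\coloneqq\prod_{i=1}^{\mathcal N}\tilde f_i(x)\prod_{j=1}^{\mathcal M}\tilde h_j(z),
\end{equation}
with an empty product read as \(1\) when \(\mathcal N=0\) or \(\mathcal M=0\).

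The core step is bounding \(\|G-\tilde G\|_\infty\) on \([-1,1]\times\mathbb R\). The key fact is that every polynomial-approximable function is bounded by \(1\) in modulus on its domain --- \(|x^n|\le1\) on \([-1,1]\), \(0<\rme^{-\beta(1+x)}\le1\), \(0<\rme^{-(\gamma x)^2}\le1\), \(|\erf(\lambda x)|<1\), and the same for the cosine/sine versions since \(\cos y,\sin y\in[-1,1]\) --- so each approximant obeys \(\|\tilde f_i\|_\infty,\|\tilde h_j\|_\infty\le1+\epsilon\le2\) once \(\epsilon\le1\). Writing \(K\coloneqq\mathcal N+\mathcal M\), applying at each fixed point the identity
\begin{equation}
    \prod_{k=1}^{K}a_k-\prod_{k=1}^{K}b_k=\sum_{k=1}^{K}\Bigl(\prod_{r<k}a_r\Bigr)(a_k-b_k)\Bigl(\prod_{r>k}b_r\Bigr)
\end{equation}
with \((a_k)\) the exact factors and \((b_k)\) the approximants, and using \(|a_k-b_k|\le\epsilon\) together with the bounds above, yields \(\|G-\tilde G\|_\infty\le(2^{K}-1)\epsilon\). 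Choosing \(\epsilon\coloneqq\delta/2^{K}\) makes this at most \(\delta\); since \(K\) is constant this is \(\epsilon=\Theta(\delta)\), and the hypothesis \(\delta\in(0,O(1)]\) is precisely what keeps \(\epsilon\) below \(1\) and inside the validity windows of the per-function approximations in Table~\ref{tab:poly_approx}.

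Finally I would compute the degree of \(\tilde G\). Since \(x\) and \(z=\rme^{iy}\) are independent variables, \(\tilde G\) splits as a polynomial in \(x\) of degree \(\sum_i\deg(\tilde f_i)\) times a Laurent polynomial in \(z\) of degree \(\sum_j\deg(\tilde h_j)\), so its degree tuple is \((\sum_i\deg(\tilde f_i),\,\sum_j\deg(\tilde h_j))\). Using \(\ln(1/\epsilon)=\ln(1/\delta)+O(1)=\Theta(\ln(1/\delta))\) together with Cauchy--Schwarz, \(\sum_{i=1}^{\mathcal N}\sqrt{\deg(f_i)}\le\sqrt{\mathcal N}\sqrt{\sum_{i=1}^{\mathcal N}\deg(f_i)}=\sqrt{\mathcal N n_p}\), and the constancy of \(\mathcal N\),
\begin{equation}
    \sum_{i=1}^{\mathcal N}\deg(\tilde f_i)=\sum_{i=1}^{\mathcal N}O\bigl(\sqrt{\deg(f_i)\ln(1/\delta)}\bigr)=O\bigl(\sqrt{n_p\ln(1/\delta)}\bigr),
\end{equation}
and identically \(\sum_j\deg(\tilde h_j)=O\bigl(\sqrt{m_p\ln(1/\delta)}\bigr)\), which is the advertised tuple. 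I expect this last bookkeeping to be the main obstacle: a sum of square roots is not in general the square root of a sum, and the hybrid bound carries a \(2^{K}\) prefactor, so both estimates rely essentially on \(\mathcal N\) and \(\mathcal M\) being constants, and the proof must flag that hypothesis explicitly. (Theorem~\ref{thm:linear_comb} runs on the same template but is lighter: the triangle inequality replaces the telescoping identity, the condition \(\sum|a_i|+\sum|b_j|=\mathcal C\) plays the role that boundedness-by-\(1\) plays here, and the degree of a linear combination is the max rather than the sum of the individual degrees.)
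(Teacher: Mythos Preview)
Your proposal is correct and follows essentially the same route as the paper's proof: approximate each factor individually, take the product of the approximants, and control the error via a telescoping (hybrid) identity, relying on \(\mathcal N+\mathcal M=O(1)\) to absorb constants. Your treatment is in fact a bit more careful than the paper's in two places---you explicitly track that the approximants are bounded by \(1+\epsilon\le 2\) (hence the \(2^{K}\) prefactor, whereas the paper tacitly treats all intermediate factors as bounded by \(1\) and gets \(K\epsilon\)), and you spell out the Cauchy--Schwarz step \(\sum_i\sqrt{\deg(f_i)}\le\sqrt{\mathcal N\,n_p}\) that the paper simply asserts.
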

\begin{remark}
    In Theorems~\ref{thm:linear_comb} and \ref{thm:product}, \(\mathcal{N}\) and \(\mathcal{M}\) are nonnegative integers, allowing the theorems to cover single-variable cases: \(F(x)\), \(F(y)\), \(G(x)\), and \(G(y)\).
    We adopt the conventions \(\max_{1 \leq i \leq 0} (\cdot) = 0\), \(\prod_{i=1}^{0} (\cdot) = 1\), and \(\sum_{i=1}^{0} (\cdot) = 0\) to ensure the theorems hold for the cases when \(\mathcal{N} = 0\) or \(\mathcal{M} = 0\).     
    Eq.~\eqref{eq:approx_condition} is a sufficient condition for the linear combination polynomial approximation.
    If the sum in Eq.~\eqref{eq:approx_condition} grows exponentially with the function's original degree, the approximation requires a higher-degree polynomial. 
\end{remark}

\section{Synthesizing Quadratically Shallow Circuits for the Polynomial-Approximable Functions}
\label{sec:synthesizing}
\begin{table*}[t!]
    \begin{tabular}{|c|c||c|c|}
        \hline
        Hamiltonian Function & Circuit Depth & Circuit Depth (Poly. Approx.) & Error
        \\ \hline
        \(\ham^n\) & \(O(n D_{\text{B}})\) & \(O(\sqrt{n \ln(1/\delta)} D_{\text{B}})\) & \(\delta\)
        \\ \hline
        \(\exp(-\beta(I + \ham))\) & \(O(t_{\beta} D_{\text{B}})\) & \(O(\sqrt{t_{\beta} \ln(1/\delta)} D_{\text{B}})\) & \(\delta\)
        \\ \hline
        \(\exp(-(\gamma \ham)^2)\) & \(O(t_{\gamma} D_{\text{B}})\) & \(O(\sqrt{t_{\gamma} \ln(1/\delta)} D_{\text{B}})\) & \(\delta\)
        \\ \hline
        \(\erf(\lambda \ham)\) & \(O(t_{\lambda} D_{\text{B}})\) & \(O(\sqrt{t_{\lambda} \ln(1/\delta)} D_{\text{B}})\) & \(\delta\) 
        \\ \hline
        \(\cos^n (\ham)\), \(\sin^n (\ham)\)
        & \(O\bigl(n^{\frac{1+v}{v}} D_{\text{ST}}\bigr)\) 
        & \(O\bigl(\bigl(\sqrt{n \ln(1/\delta)}\bigr)^{\frac{1+v}{v}} D_{\text{ST}}\bigr)\) & \(O(\delta)\)
        \\ \hline
        \(\exp(-\beta(I + \cos \ham))\), \(\exp(-\beta(I + \sin \ham))\) 
        & \(O\bigl({t_{\beta}}^{\frac{1+v}{v}} D_{\text{ST}}\bigr)\) 
        & \(O\bigl(\bigl(\sqrt{t_{\beta} \ln(1/\delta)}\bigr)^{\frac{1+v}{v}} D_{\text{ST}}\bigr)\) & \(O(\delta)\)
        \\ \hline
        \(\exp(-(\gamma \cos \ham)^2)\), \(\exp(-(\gamma \sin \ham)^2)\) 
        & \(O\bigl({t_{\gamma}}^{\frac{1+v}{v}} D_{\text{ST}}\bigr)\) 
        & \(O\bigl(\bigl(\sqrt{t_{\gamma} \ln(1/\delta)}\bigr)^{\frac{1+v}{v}} D_{\text{ST}}\bigr)\) & \(O(\delta)\)
        \\ \hline
        \(\erf(\lambda\cos \ham)\), \(\erf(\lambda\sin \ham)\) 
        & \(O\bigl({t_{\lambda}}^{\frac{1+v}{v}} D_{\text{ST}}\bigr)\) 
        & \(O\bigl(\bigl(\sqrt{t_{\lambda} \ln(1/\delta)}\bigr)^{\frac{1+v}{v}} D_{\text{ST}}\bigr)\) & \(O(\delta)\)
        \\ \hline
        \(F(\ham)\) (Eq.~\eqref{eq:ham_linear}) 
        & \(O\bigl(n_l D_{\text{B}} + m_l^{\frac{1+v}{v}} D_{\text{ST}}\bigr)\)
        & \(O\bigl(\sqrt{n_l \ln(1/\delta)} D_{\text{B}} + \bigl(\sqrt{m_l \ln(1/\delta)}\bigr)^{\frac{1+v}{v}} D_{\text{ST}}\bigr)\) & \(O(\delta)\)
        \\ \hline
        \(G(\ham)\) (Eq.~\eqref{eq:ham_prod}) 
        & \(O\bigl(n_p D_{\text{B}} + m_p^{\frac{1+v}{v}} D_{\text{ST}}\bigr)\)
        & \(O\bigl(\sqrt{n_p \ln(1/\delta)} D_{\text{B}} + \bigl(\sqrt{m_p \ln(1/\delta)}\bigr)^{\frac{1+v}{v}} D_{\text{ST}}\bigr)\) & \(O(\delta)\)
        \\ \hline
    \end{tabular}
    \caption{
        Summary of quadratic reductions in circuit depths for synthesizing polynomial-approximable functions of \(\ham\) using the methods in Sections~\ref{sec:poly_approx} and~\ref{sec:lc_prod}.
        \(D_{\text{B}}\) and \(D_{\text{ST}}\) follow their definitions in the main text.
        The rightmost column represents the error of the approximated circuit, whose scaling is not worsened by the approximation, except \(\ham^n\).
    }
    \label{tab:hamil_function}
\end{table*}

The Chebyshev and Laurent polynomial approximations in Section~\ref{sec:poly_approx}, together with Theorems~\ref{thm:linear_comb} and \ref{thm:product}, enable the synthesis of the polynomial-approximable functions of \(\ham\) with quadratically shallow quantum circuits.
The \(N\)-qubit Hamiltonian \(\ham\) is assumed to be \(k\)-local and given as a linear combination of \(J\) Pauli strings:
\begin{equation}
    \ham = \sum_{\ell=0}^{J-1} \frac{\kappa_{\ell}}{\alpha} P_{\ell},
    \label{eq:ham}
\end{equation}
where \(\ham\) is normalized by \(\alpha = \sum_{\ell=0}^{J-1} |\kappa_{\ell}|\) to ensure \(\lVert \ham \rVert \leq 1\), and each \(P_{\ell}\) is a tensor product of at most \(k = O(1)\) Pauli operators.

We employ QSP and generalized quantum signal processing (GQSP) \cite{de2022fourier,motlagh2024generalized,haah2019product} to synthesize the polynomials on quantum circuits.
QSP can implement a polynomial \(P(\ham)\) of degree \(d_1\) using \(O(d_1)\) block-encodings \cite{gilyen2019quantum}.
A QSP sequence involves \(O(d_1)\) phase factors, which can be computed in time \(O(d_1^2)\) \cite{dong2021efficient}.
GQSP can implement a Laurent polynomial \(L(U)\) of degree \(d_2\) using \(O(d_2)\) 0-controlled \(U\) and 1-controlled \(U^{\dagger}\) operations, where \(U \coloneqq e^{\mathrm{i}\ham}\) \cite{berntson2025complementary,motlagh2024generalized}.
The \(O(d_2)\) phase factors in a GQSP circuit can be computed in time \(O(d_2 \log_2 d_2)\) \cite{motlagh2024generalized}.
We place the formal theorems and reviews of QSP and GQSP in Appendices~\ref{appendix:review_qsp} and~\ref{appendix:review_gqsp}, respectively, as this section focuses on circuit depth reduction achieved by the approximations.

For the Hamiltonian functions \(P(\ham)\), their QSP circuits require \(O(d_1)\) block-encodings of \(\ham\).
To construct a block-encoding of \(\ham\), \(O(\log_2 J)\) ancilla qubits are needed to encode the \(J\) control states associated with each Pauli string \(P_{\ell}\), and the entire block-encoding can be implemented with a CNOT depth of \(D_{\text{B}} \coloneqq O(J(k+\log_2 J))\) \cite{berry2015simulating,shende2005synthesis,da2022linear}.
Hence, the total QSP circuit depth is \(O(d_1 D_{\text{B}})\).

GQSP circuits for synthesizing the Laurent polynomials involve both controlled \(U\) and \(U^{\dagger}\).
In this work, we consider implementing the controlled Hamiltonian simulations using the \(2v\)th-order symmetric Suzuki-Trotter decomposition.
A Laurent polynomial \(L\) of degree \(d_2\) accumulates \(O(d_2^2)\) additive and multiplicative Trotter errors.
To ensure that the total Trotter error for implementing the \(d_2\)-degree Laurent polynomial is bounded by \(O(\delta)\),
the resulting GQSP circuit depth for a degree-\(d_2\) Laurent polynomial is \(O(d_2^{1+1/v} D_{\text{ST}})\), where we define \(D_{\text{ST}} \coloneqq 5^{v-1} Jk/\delta^{1/(2v)}\) (see Appendix~\ref{appendix:analysis_gqsp_circ} for detail).

The Chebyshev and Laurent polynomial approximations yield quadratically shallow circuits: \(O(\sqrt{d_1 \ln(1/\delta)} D_{\text{B}})\) and \(O((\sqrt{d_2 \ln(1/\delta)})^{1+1/v} D_{\text{ST}})\), respectively.
Furthermore, the quadratic reduction in the number of phase factors enhances numerical stability and reduces the classical computational cost of the phase-factor finding algorithms.

Our theorems can be used to approximate any Hamiltonian function that is either a linear combination or a product of the polynomial-approximable functions, i.e.,
\begin{equation}
    F(\ham) = \sum_{i=1}^{\mathcal{N}} a_i f_i(\ham) + \sum_{j=1}^{\mathcal{M}} b_j g_j(\ham),
    \label{eq:ham_linear}
\end{equation}
and
\begin{equation}
    G(\ham) = \prod_{i=1}^{\mathcal{N}} f_i(\ham) \prod_{j=1}^{\mathcal{M}} g_j(\ham),
    \label{eq:ham_prod}
\end{equation}
where we use the shorthand \(F(\ham) \coloneqq F(\ham, \ham)\) and \(G(\ham) \coloneqq G(\ham, \ham)\) for brevity.
Note that Eqs.~\eqref{eq:ham_linear} and~\eqref{eq:ham_prod} follow the notations given in Theorems~\ref{thm:linear_comb} and~\ref{thm:product}, respectively, and include the cases when \(\mathcal{N} = 0\) or \(\mathcal{M} = 0\).
For the linear combination \(F(\ham)\), Theorem~\ref{thm:linear_comb} reduces the required circuit depth of \(O\bigl(n_l D_{\text{B}} + m_l^{1+1/v} D_{\text{ST}}\bigr)\) to \(O\bigl(\sqrt{n_l \ln(1/\delta)} D_{\text{B}} + \bigl(\sqrt{m_l \ln(1/\delta)}\bigr)^{\frac{1+v}{v}} D_{\text{ST}}\bigr)\).
Theorem~\ref{thm:product} provides a similar improvement for the product function \(G(\ham)\) as shown in Table~\ref{tab:hamil_function}.
For example, the degree of \(\ham^n \rme^{-\ham^2 / \sigma^2}\) \cite{zhang2024measurement} can be quadratically reduced.
A special case of the Gaussian-power function is the Gaussian derivative filter \(\ham \rme^{-\ham^2 / \sigma^2}\) used in a filtered Krylov method \cite{wang2023quantum}.

The results of this section are summarized in Table~\ref{tab:hamil_function}.
Previous research suggested the Chebyshev polynomial approximation to \(\rme^{-\beta(I+\ham)}\) for imaginary time evolution \cite{gilyen2019quantum}, and eigenvalue filtering methods based on \(\rme^{-(\gamma \ham)^2}\) \cite{low2017hamiltonian} and \(\erf(\lambda H)\) \cite{low2017hamiltonian}.
Our Laurent polynomial approximation enables the efficient implementation of \(\cos^n(\ham)\), \(\rme^{-\beta(I+\cos \ham)}\), \(\rme^{-(\gamma \cos \ham)^2}\), and \(\erf(\lambda\cos \ham)\), as well as their sine formulas, for eigenvalue filtering.
Moreover, Theorems~\ref{thm:linear_comb} and~\ref{thm:product} further generalize the range of feasible filtering functions.
We also point out that approximating \(\ham^n\) for estimating Hamiltonian moments \(\left<\ham^n\right>\) may improve the computational complexity of the quantum power method \cite{seki2021quantum}.

\section{Conclusion}
\label{sec:conclusion}
We have proposed a general function approximation framework, stated in Theorems~\ref{thm:linear_comb} and \ref{thm:product}, whose generality is further enhanced by adopting our new Laurent polynomial approximation.
Our approach quadratically reduces circuit depths for many useful Hamiltonian functions (Table~\ref{tab:hamil_function}) used in quantum ground-state preparation and energy estimation algorithms.
This quadratic reduction in circuit depth may quadratically decrease the computational complexity of quantum algorithms that require the construction of Hamiltonian functions.
Moreover, our approximation is useful for preparing an initial state with a high overlap with the ground state, which typically needs a high-degree polynomial filter.

Theorem~\ref{thm:linear_comb} and \ref{thm:product} allow the approximation of generalized classical window functions \cite{prabhu2014window}, which are applicable for quantum eigenvalue and eigenstate filtering.
For instance, a wide range of target functions can be numerically fit to linear combinations of exponential functions (i.e., \(\sum_{k} c_k \rme^{- \beta_k (I+\ham)}\))
or Gaussian functions (i.e., \(\sum_{k} c_k \rme^{-(\gamma_k \ham)^2}\)).
A linear combination of power-of-sine and power-of-cosine window functions can be suggested: \(\sum_{n,m} c_n \cos^{n} \ham + b_m \sin^m \ham\).

Furthermore, a straightforward extension of our theorems can be achieved through a change of variables. 
By substituting \(x \mapsto x^n\), the generalized normal window \(\exp(-(\ham^2/\sigma^2)^n)\) truncated at degree \(\tau\) can be approximated by a polynomial of degree \(O\bigl(n\sqrt{\tau \ln(1/\delta)}\bigr)\).
We expect the function approximation in this work to serve as an essential step for efficient quantum ground-state preparation and energy estimation.

\begin{acknowledgments}
    The authors gratefully acknowledge the many helpful discussions with Gwonhak Lee during the preparation of this paper.
    This work was partly supported by Basic Science Research Program through the National Research Foundation of Korea (NRF), funded by the Ministry of Education, Science and Technology (NRF-2022M3H3A106307411, RS-2023-NR119931, RS-2025-03532992, RS-2025-07882969). This work was also partly supported by Institute for Information \& communications Technology Promotion (IITP) grant funded by the Korea government (MSIP) (No. 2019-0-00003, Research and Development of Core technologies for Programming, Running, Implementing and Validating of Fault-Tolerant Quantum Computing System). The Ministry of Trade, Industry, and Energy (MOTIE), Korea, also partly supported this research under the Industrial Innovation Infrastructure Development Project (Project No. RS-2024-00466693). JH is supported by the Yonsei University Research Fund of 2025-22-0140.
\end{acknowledgments}

\bibliography{references}

\onecolumngrid
\section*{Appendix}
\appendix

\section{The Chebyshev Polynomial Approximation Theorems and Proofs}
\label{appendix:proof}
For completeness, we explicitly state and prove the approximation theorem and the corollaries referenced in Section~\ref{sec:poly_approx}, which were developed in Refs.~\cite{low2017hamiltonian,sachdeva2014faster}.
Before proving the polynomial approximation theorems, we review Chapter 3 in Ref. \cite{sachdeva2014faster} to discuss the relevant properties of Chebyshev polynomials.
The recurrence relation for the Chebyshev polynomials is 
\begin{equation}
    T_n(x) = 2x T_{n-1}(x) - T_{n-2}(x), \quad n = 2, 3, \ldots,
    \label{eq:cheby_def}
\end{equation}
where
\begin{equation}
    T_0(x) = 1, \quad T_1(x) = x.
\end{equation}
By defining \(T_d(x) = T_{|d|}(x)\) for any \(d \in \mathbb{Z}^{-}\), Eq.~\eqref{eq:cheby_def} can be rearranged as follows \cite{sachdeva2014faster}:
\begin{equation}
    x \cdot T_d(x) = \frac{1}{2} \left(T_{d-1}(x) + T_{d+1}(x) \right).
    \label{eq:rand_cheby}
\end{equation}
Let \(X\) be a random variable taking values \(1\) and \(-1\), each with probability \(1/2\). Then Eq.~\eqref{eq:rand_cheby} can be expressed as 
\begin{equation}
    x \cdot T_d(x) = \mathbb{E}_X \left[ T_{d+X}(x) \right].
\end{equation}
This observation extends to the Chebyshev expansion of the monomial \(x^n\).
Let \(X_1, \ldots, X_n\) be independent and identically distributed random variables such that \(\Pr(X_i = 1) = \Pr(X_i = -1) = 1/2\) for all \(i\).
Define \(Z_n \coloneqq \sum_{i=1}^{n} X_i\) and \(Z_0 \coloneqq 0\).
The Chebyshev expansion of \(x^n\) can be formulated as the expectation of these random variables as the following theorem \cite[Theorem 3.1]{sachdeva2014faster}:
\begin{theorem}
    \label{thm:monomial_exact}
    Let \(n \in \mathbb{N}_{0}\). For the random variables \(X_1,\ldots,X_n\) defined above, the following Chebyshev series expands the monomial \(x^n\):
    \begin{equation}
        x^n = \underset{X_1,\ldots,X_n}{\mathbb{E}} \left[ T_{Z_n}(x) \right],
        \label{eq:mono_prob}
    \end{equation}
    where \(T_{Z_n} \coloneqq T_{|Z_n|}\) for negative \(Z_n\).
\end{theorem}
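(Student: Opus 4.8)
The plan is to establish Eq.~\eqref{eq:mono_prob} by induction on $n$, with the rearranged three-term relation~\eqref{eq:rand_cheby}---valid for every integer index once one adopts the convention $T_d(x) = T_{|d|}(x)$---doing essentially all of the work. The underlying picture is that $T_{Z_n}(x)$ tracks a lazy reflecting random walk on $\mathbb{N}_0$: applying the identity to $x \cdot T_d(x)$ sends the index $d$ to $d\pm 1$ with equal probability, which is exactly one step of the walk $Z_0, Z_1, Z_2, \ldots$, and the convention $T_d = T_{|d|}$ implements the reflection at the origin.

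For the base case $n = 0$ I would simply note that $Z_0 = 0$ by definition, so $\mathbb{E}[T_{Z_0}(x)] = T_0(x) = 1 = x^0$. For the inductive step, assuming $x^n = \mathbb{E}_{X_1,\ldots,X_n}[T_{Z_n}(x)]$, I would multiply through by $x$, pull the $x$ inside the (finite) expectation, and then rewrite $x \cdot T_{Z_n}(x) = \tfrac{1}{2}\bigl(T_{Z_n - 1}(x) + T_{Z_n + 1}(x)\bigr)$ via~\eqref{eq:rand_cheby}. Since $X_{n+1}$ is independent of $X_1,\ldots,X_n$ and takes the values $-1$ and $+1$ each with probability $1/2$, the right-hand side equals $\mathbb{E}_{X_{n+1}}\bigl[T_{Z_n + X_{n+1}}(x)\bigr] = \mathbb{E}_{X_{n+1}}\bigl[T_{Z_{n+1}}(x)\bigr]$, using $Z_{n+1} = Z_n + X_{n+1}$. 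Taking the outer expectation over $X_1,\ldots,X_n$ and invoking the tower property (legitimate here because every expectation is a finite average over $\{-1,+1\}^{n+1}$) yields $x^{n+1} = \mathbb{E}_{X_1,\ldots,X_{n+1}}[T_{Z_{n+1}}(x)]$, which closes the induction.

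The one delicate point---and the only place the argument could go wrong if handled carelessly---is the reflecting boundary at the origin: the recurrence~\eqref{eq:cheby_def} is stated only for $n \ge 2$, so I must check that~\eqref{eq:rand_cheby} remains valid at index $d = 0$, i.e.\ that $x \cdot T_0(x) = \tfrac{1}{2}\bigl(T_{-1}(x) + T_1(x)\bigr)$ is consistent with the convention $T_{-1} := T_1$. This is immediate, since $x \cdot 1 = \tfrac{1}{2}(x + x)$, and once it is in hand the identity holds uniformly for all integer indices and the whole induction becomes transparent. No genuine obstacle remains; the proof is short and elementary, and the only thing worth stating carefully is the index bookkeeping at $0$.
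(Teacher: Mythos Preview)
Your proposal is correct and follows essentially the same approach as the paper's proof: induction on $n$, with the base case $n=0$ trivial and the inductive step driven by pulling $x$ inside the expectation and applying the rearranged recurrence~\eqref{eq:rand_cheby} to convert $x\cdot T_{Z_n}(x)$ into an average over the next step $X_{n+1}$. Your explicit check of the boundary case $d=0$ and the remark on the tower property are small expository additions, but the underlying argument is identical.
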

\begin{proof}
    The proof is by induction on \(n\). The base case of \(n = 0\) holds since \(x^0 = \mathbb{E} \left[ T_{Z_0}(x) \right] = T_0(x)\).
    Next, we assume Eq.~\eqref{eq:mono_prob} holds for \(n \geq 0\). Then, for \(n+1\), we have:
    \begin{equation}
    \begin{split}
        x^{n+1}
        & = x \cdot \underset{X_1,\ldots,X_n}{\mathbb{E}} \left[ T_{Z_n}(x) \right] \\
        & = \underset{X_1,\ldots,X_n}{\mathbb{E}} \left[ x \cdot T_{Z_n}(x) \right] \\
        & = \underset{X_1,\ldots,X_n}{\mathbb{E}} \left[ \frac{1}{2} \left(T_{Z_n-1}(x) + T_{Z_n+1}(x) \right) \right] \\
        & = \Pr(X_{n+1} = -1) \cdot \underset{X_1,\ldots,X_n}{\mathbb{E}} \left[ T_{Z_n-1}(x) \right]
        + \Pr(X_{n+1} = 1) \cdot \underset{X_1,\ldots,X_n}{\mathbb{E}} \left[ T_{Z_n+1}(x) \right] \\
        & = \underset{X_1,\ldots,X_{n+1}}{\mathbb{E}} \left[ T_{Z_{n+1}}(x) \right],
    \end{split}        
    \end{equation}
    where the third equality follows from Eq.~\eqref{eq:rand_cheby}.
\end{proof}
The explicit formula of Eq.~\eqref{eq:mono_prob}, using the definition \(T_d(x) = T_{|d|}(x)\), is given by
\begin{equation}
    x^n = \sum_{k=0}^{n} c_{n,k} T_k(x),
    \label{eq:mono_exact}
\end{equation}
where the coefficients \(c_{n,k}\) are given in Eq.~\eqref{eq:mono_coeff}.
Note that the probability distribution of the sum of the random variables \(X_1, \ldots, X_n\) has exponentially decreasing tails, as it forms a simple symmetric random walk centered at zero after \(n\) steps \cite{montgomery1990distribution}.
The Chernoff bound for the random variables implies that the probability of observing a large \(|Z_n|\) is exponentially small.
\begin{theorem}
    \label{thm:chernoff}
    (Chernoff bound \cite[Chapter 4.3, Corollary 4.8]{mitzenmacher2017probability}).
    Let \(X_1,\ldots,X_n\) be the random variables defined above.
    Then,
    \begin{equation}
        \Pr(|Z_n| \geq a) \leq 2 \rme^{-a^2/2n}, \quad \forall a \in \mathbb{R}^{+}.
    \end{equation}
\end{theorem}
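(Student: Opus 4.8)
The plan is to prove Theorem~\ref{thm:chernoff} by the standard exponential moment (Chernoff) method: first establish the one-sided tail estimate $\Pr(Z_n \geq a) \leq \rme^{-a^2/(2n)}$, and then deduce the two-sided bound by symmetry together with a union bound.

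For the one-sided estimate, the idea is to fix a parameter $t > 0$ and apply Markov's inequality to the nonnegative random variable $\rme^{t Z_n}$, which gives $\Pr(Z_n \geq a) = \Pr(\rme^{t Z_n} \geq \rme^{t a}) \leq \rme^{-t a}\, \mathbb{E}[\rme^{t Z_n}]$. Since $Z_n = \sum_{i=1}^{n} X_i$ with independent summands, the moment generating function factorizes as $\mathbb{E}[\rme^{t Z_n}] = \prod_{i=1}^{n} \mathbb{E}[\rme^{t X_i}] = (\cosh t)^n$, using $\mathbb{E}[\rme^{t X_i}] = \tfrac{1}{2}(\rme^{t} + \rme^{-t})$.

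The key ingredient is then the elementary sub-Gaussian bound $\cosh t \leq \rme^{t^2/2}$, valid for all $t \in \mathbb{R}$, which I would verify by comparing Maclaurin series coefficient by coefficient: $\cosh t = \sum_{k \geq 0} t^{2k}/(2k)!$ while $\rme^{t^2/2} = \sum_{k \geq 0} t^{2k}/(2^{k} k!)$, and $(2k)! \geq 2^{k} k!$ for every $k \geq 0$. This yields $\Pr(Z_n \geq a) \leq \rme^{-t a + n t^2/2}$ for all $t > 0$; minimizing the exponent over $t$, with the minimum attained at $t = a/n$, gives $\Pr(Z_n \geq a) \leq \rme^{-a^2/(2n)}$.

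Finally, since each $X_i$ is symmetric, $-Z_n$ has the same distribution as $Z_n$, so the same argument gives $\Pr(Z_n \leq -a) = \Pr(-Z_n \geq a) \leq \rme^{-a^2/(2n)}$; the union bound $\Pr(|Z_n| \geq a) \leq \Pr(Z_n \geq a) + \Pr(-Z_n \geq a)$ then completes the proof. There is no genuine obstacle here, as the result is classical; the only steps requiring any care are the moment-generating-function bound $\cosh t \leq \rme^{t^2/2}$ and the optimal choice $t = a/n$.
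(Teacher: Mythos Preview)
Your argument is correct and is exactly the standard Chernoff/Hoeffding proof for symmetric Rademacher sums: Markov applied to $\rme^{tZ_n}$, factorization by independence, the sub-Gaussian bound $\cosh t \le \rme^{t^2/2}$ verified termwise, optimization at $t=a/n$, and a union bound via the symmetry $-Z_n \overset{d}{=} Z_n$. The paper itself does not supply a proof of this theorem at all---it simply quotes the bound from \cite[Chapter~4.3, Corollary~4.8]{mitzenmacher2017probability}---so there is nothing to compare against; your write-up is precisely the textbook derivation that reference would give.
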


We are now in a position to present the theorem for approximating monomials.
\begin{theorem}
    (Polynomial Approximation to monomials \cite[Theorem 3.3]{sachdeva2014faster}).
    \label{thm:monomial_approx}
    Let \(n, d \in \mathbb{N}\).
    The uniform norm of the error in approximating \(x^n\) by the degree-\(d\) polynomial \(p_{n,d}\) defined in Eq.~\eqref{eq:mono_approx} satisfies
    \begin{equation}
        \underset{x \in [-1,1]}{\sup} \left| x^n - p_{n,d}(x) \right| \leq 2 \rme^{-d^2/2n}.
    \end{equation}
\end{theorem}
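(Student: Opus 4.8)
The plan is to recognize $p_{n,d}$ as the degree-$d$ truncation of the exact Chebyshev series of Theorem~\ref{thm:monomial_exact} and to estimate the discarded tail using the Chernoff bound of Theorem~\ref{thm:chernoff}.

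First I would translate the exact expansion into probabilistic language. Comparing the coefficients in Eq.~\eqref{eq:mono_coeff} with the law of $Z_n = \sum_{i=1}^{n} X_i$, and using $\binom{n}{(n-j)/2} = \binom{n}{(n+j)/2}$, one checks that $c_{n,j} = \Pr(|Z_n| = j)$ for $j \ne 0$ and $c_{n,0} = \Pr(Z_n = 0)$. Hence, under the convention $T_j = T_{|j|}$, the partial sum $p_{n,d}$ is exactly the restriction of the expectation in Eq.~\eqref{eq:mono_prob} to the event $\{|Z_n| \le d\}$, so that $x^n - p_{n,d}(x) = \mathbb{E}_{X_1,\dots,X_n}\!\left[ T_{Z_n}(x)\, \mathbbm{1}(|Z_n| > d) \right]$ (the right-hand side being $0$ when $d \ge n$).

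Next I would take the supremum over $x \in [-1,1]$. Since $T_j(x) = \cos(j \arccos x)$ satisfies $|T_j(x)| \le 1$ on $[-1,1]$, the triangle inequality for expectations gives
\[
  \sup_{x \in [-1,1]} \bigl| x^n - p_{n,d}(x) \bigr| \;\le\; \mathbb{E}\bigl[ \mathbbm{1}(|Z_n| > d) \bigr] \;=\; \Pr(|Z_n| > d) \;\le\; \Pr(|Z_n| \ge d) \;\le\; 2\rme^{-d^2/2n},
\]
where the final inequality is Theorem~\ref{thm:chernoff} applied with $a = d$ (noting $d \in \mathbb{N}$, so $a > 0$).

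The only step that requires genuine care, and hence the main obstacle — though a mild one — is the coefficient/probability bookkeeping: matching the factor of two between $c_{n,j}$ (for $j \ne 0$) and the two-sided event $\{|Z_n| = j\}$, treating the $j = 0$ term separately, and applying the $T_j = T_{|j|}$ convention consistently so that truncating at degree $d$ corresponds precisely to $\{|Z_n| \le d\}$. Once this identification is in place, the bound follows immediately from $|T_j| \le 1$ on $[-1,1]$ together with the already-stated Chernoff inequality; no new estimate is needed.
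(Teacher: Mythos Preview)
Your proposal is correct and follows essentially the same route as the paper: identify the error as the tail $\sum_{j>d} c_{n,j} T_j(x)$ of the exact Chebyshev expansion, use $|T_j|\le 1$ on $[-1,1]$ together with $c_{n,j}\ge 0$, recognize the resulting coefficient sum as $\Pr(|Z_n|>d)$, and finish with the Chernoff bound. Your expectation/indicator phrasing and the explicit bookkeeping $c_{n,j}=\Pr(|Z_n|=j)$ are a slightly more careful packaging of the same argument, not a different one.
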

\begin{proof}
    \begin{equation}
    \begin{split}
        \underset{x \in [-1,1]}{\sup} \left| x^n - p_{n,d}(x) \right| 
        & = \underset{x \in [-1,1]}{\sup} \left| \sum_{j=d+1}^{n} c_{n,j} T_j(x) \right| \\
        & \leq \underset{x \in [-1,1]}{\sup} \sum_{j=d+1}^{n} c_{n,j} \left| T_j(x) \right| \\
        & \leq \sum_{j=d+1}^{n} c_{n,j} \cdot \left( \underset{x \in [-1,1]}{\sup} \left| T_j(x) \right| \right) \\
        & \leq \sum_{j=d+1}^{n} c_{n,j} \cdot 1 \\
        & \leq \Pr(|Z_n| > d) \\
        & \leq 2\rme^{-d^2/2n},
    \end{split}
    \end{equation}
    where the first inequality follows as \(c_{n,j} \geq 0\) for all \(j\), and the last inequality follows from the Chernoff bound.
    To upper bound the error by \(\delta\), it suffices to choose \(d \geq \left\lceil \sqrt{2n \ln(2/\delta)} \right\rceil\).
\end{proof}

We now proceed to state a theorem, corollaries, and their proofs about polynomial approximations to \(\rme^{-\beta(1+x)}\), \(\rme^{-(\gamma x)^2}\), and \(\erf(\lambda(x-b))\).
\begin{theorem}
    (Polynomial approximation to \(\rme^{-\beta(1+x)}\) \cite[Lemma 4.2]{sachdeva2014faster}).
    \label{col:exp_approx}
    Let \(\beta >0\) and \(\delta \in (0,1/2]\).
    For \(t_{\beta} = O(\beta + \ln(1/\delta))\) and \(d = O\bigl(\sqrt{t_{\beta} \ln(1/\delta)}\bigr)\), 
    the degree-\(d\) polynomial \(q_{\beta,t_{\beta},d}(x)\) defined in Eq.~\eqref{eq:exp_approx} \(\delta\)-approximates \(\rme^{-\beta(1+x)}\) over the interval \([-1,1]\):
    \begin{equation}
        \underset{x \in [-1,1]}{\sup} \left| \rme^{-\beta(1+x)} - q_{\beta,t_{\beta},d}(x) \right| \leq \delta.
    \end{equation}
\end{theorem}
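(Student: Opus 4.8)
The plan is to control the error by the triangle inequality, splitting it into a \emph{truncation} part coming from cutting the Maclaurin series $\rme^{-\beta(1+x)} = \rme^{-\beta}\sum_{k\ge 0}\frac{(-\beta)^k}{k!}x^k$ at order $t_\beta$, and a \emph{polynomial-approximation} part coming from replacing each surviving monomial $x^k$ by $p_{k,d}(x)$ via Theorem~\ref{thm:monomial_approx}. Concretely, for $x\in[-1,1]$,
\[
\bigl|\rme^{-\beta(1+x)} - q_{\beta,t_\beta,d}(x)\bigr| \;\le\; \rme^{-\beta}\Bigl|\sum_{k>t_\beta}\frac{(-\beta x)^k}{k!}\Bigr| \;+\; \rme^{-\beta}\sum_{k=0}^{t_\beta}\frac{\beta^k}{k!}\,\bigl|x^k - p_{k,d}(x)\bigr|,
\]
and it suffices to bound each summand by $\delta/2$ uniformly on $[-1,1]$.

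The polynomial term is the easy one. Since $|x|\le 1$ we have $p_{0,d}=x^0$ exactly (and more generally $p_{k,d}=x^k$ for $k\le d$), while for $1\le k\le t_\beta$ Theorem~\ref{thm:monomial_approx} gives $\sup_{|x|\le 1}|x^k-p_{k,d}(x)|\le 2\rme^{-d^2/2k}\le 2\rme^{-d^2/2t_\beta}$, using that $k\mapsto \rme^{-d^2/2k}$ is nondecreasing. Pulling this factor out of the sum and bounding $\sum_{k=0}^{t_\beta}\beta^k/k!\le \rme^{\beta}$, the second summand collapses to at most $2\rme^{-d^2/2t_\beta}$, which is $\le\delta/2$ as soon as $d\ge\sqrt{2t_\beta\ln(4/\delta)}=O\bigl(\sqrt{t_\beta\ln(1/\delta)}\bigr)$.

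The truncation term is the main obstacle, since it requires a careful tail estimate for the exponential series and a matching choice of $t_\beta$. For $|x|\le 1$ it is bounded by $\rme^{-\beta}\sum_{k>t_\beta}\beta^k/k!$. The standard route is to factor out the leading term and geometrically sum the remainder: assuming $t_\beta+1>\beta$,
\[
\sum_{k>t_\beta}\frac{\beta^k}{k!}\;\le\;\frac{\beta^{t_\beta+1}}{(t_\beta+1)!}\cdot\frac{1}{1-\beta/(t_\beta+1)}\;\le\;\Bigl(\frac{\rme\beta}{t_\beta+1}\Bigr)^{t_\beta+1}\cdot\frac{t_\beta+1}{t_\beta+1-\beta},
\]
where the second inequality uses $(t_\beta+1)!\ge\bigl((t_\beta+1)/\rme\bigr)^{t_\beta+1}$. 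Choosing $t_\beta+1\ge \rme^2\beta$ forces the base $\rme\beta/(t_\beta+1)\le \rme^{-1}$ and keeps the prefactor $O(1)$, so the tail is $O(\rme^{-t_\beta})$; then taking $t_\beta = O(\beta+\ln(1/\delta))$ makes $\rme^{-\beta}\sum_{k>t_\beta}\beta^k/k!\le \rme^{-t_\beta}\le\delta/2$. The hypothesis $\delta\in(0,1/2]$ is what guarantees $\ln(1/\delta)>0$, so that this prescription for $t_\beta$ is well posed and the $O(1)$ constants from the geometric factor can be absorbed. Combining the two $\delta/2$ bounds through the triangle inequality gives $\sup_{|x|\le 1}\bigl|\rme^{-\beta(1+x)}-q_{\beta,t_\beta,d}(x)\bigr|\le\delta$. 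The delicate point throughout is the constant bookkeeping needed to show that the \emph{stated} asymptotic forms of $t_\beta$ and $d$ — not merely some admissible values — already suffice.
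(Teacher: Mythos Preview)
Your proof is correct and follows essentially the same route as the paper: the same triangle-inequality split into a Maclaurin truncation term and a monomial-approximation term, the same use of Theorem~\ref{thm:monomial_approx} with the monotonicity of $k\mapsto \rme^{-d^2/2k}$ and the bound $\sum_{k\ge 0}\beta^k/k!\le \rme^{\beta}$, and the same Stirling-based tail estimate under the constraint $t_\beta\gtrsim \rme^2\beta$. The only cosmetic difference is that the paper bounds the tail term-by-term via $(\beta \rme/k)^k\le \rme^{-k}$ and sums, whereas you factor out the leading term and geometrically sum the ratio; both yield the same $O(\rme^{-t_\beta})$ conclusion and the same choices $t_\beta=\lceil\max(\beta \rme^2,\ln(2/\delta))\rceil$, $d=\lceil\sqrt{2t_\beta\ln(4/\delta)}\rceil$.
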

\begin{proof}
    The error can be separated into two terms:
    \begin{equation}
    \begin{split}
        & \underset{x \in [-1,1]}{\sup} \left| \rme^{-\beta(1+x)} - \rme^{-\beta} \sum_{k=0}^{t_{\beta}} \frac{(-\beta)^k}{k!} p_{k,d}(x) \right| \\ 
        & \leq
        \underset{x \in [-1,1]}{\sup} \left| \rme^{-\beta} \sum_{k=0}^{t_{\beta}} \frac{(-\beta)^k}{k!} \left(x^k - p_{k,d}(x)\right) \right| 
        + \underset{x \in [-1,1]}{\sup} \left| \rme^{-\beta} \sum_{k=t_{\beta}+1}^{\infty} \frac{(-\beta)^k}{k!} x^k \right|,
    \end{split}
    \end{equation}
    where the first term is the sum of the monomial approximation errors from the truncated Maclaurin series, and the second term is the truncation error itself.
    Each term is upper bounded by \(\delta/2\) to ensure the total error is at most \(\delta\).
    The second term can be upper bounded as follows:
    \begin{equation}
    \begin{split}
        \underset{x \in [-1,1]}{\sup} \left| \rme^{-\beta} \sum_{k=t_{\beta}+1}^{\infty} \frac{(-\beta)^k}{k!} x^k \right|
        & \leq \rme^{-\beta} \sum_{k=t_{\beta}+1}^{\infty} \frac{\beta^k}{k!} \\
        & \leq \rme^{-\beta} \sum_{k=t_{\beta}+1}^{\infty} \left(\frac{\beta \rme}{k}\right)^k \\
        & \leq \rme^{-\beta} \sum_{k=t_{\beta}+1}^{\infty} \rme^{-k} \\
        & \leq \rme^{-\beta - t_{\beta}} \leq \frac{\delta}{2},
    \end{split}
    \end{equation}
    where the second inequality follows from the lower bound of Stirling's approximation, \(k! \geq (k/\rme)^k\), and 
    we assume \(t_{\beta} \geq \beta \rme^2\) in the third inequality.
    Thus, it suffices to set \(t_{\beta} = \left\lceil \max(\beta \rme^2, \ln(2/\delta)) \right\rceil = O(\beta + \ln(1/\delta))\).

    Finally, the first term can be upper bounded as follows:
    \begin{equation}
    \begin{split}
        \underset{x \in [-1,1]}{\sup} \left| \rme^{-\beta} \sum_{k=0}^{t_{\beta}} \frac{(-\beta)^k}{k!} \left(x^k - p_{k,d}(x)\right) \right|
        & \leq \rme^{-\beta} \sum_{k=0}^{t_{\beta}} \frac{\beta^k}{k!} \cdot 2\rme^{-d^2/2k} \\
        & \leq 2\rme^{-d^2/2t_{\beta}} \cdot \rme^{-\beta} \sum_{k=0}^{t_{\beta}} \frac{\beta^k}{k!} \\
        & \leq 2\rme^{-d^2/2t_{\beta}} \cdot \rme^{-\beta} \sum_{k=0}^{\infty} \frac{\beta^k}{k!} \\
        & \leq 2\rme^{-d^2/2t_{\beta}} \leq \frac{\delta}{2}.
    \end{split}
    \end{equation}
    Therefore, it suffices to choose \(d = \left\lceil \sqrt{2t_{\beta} \ln(4/\delta)} \right\rceil = O\bigl(\sqrt{t_{\beta} \ln(1/\delta)}\bigr)\).
\end{proof}

\begin{corollary}
    (Polynomial approximation to \(\rme^{-(\gamma x)^2}\) \cite[Corollary 3]{low2017hamiltonian}).
    \label{col:gauss_approx}
    Let \(\gamma \geq 0\) and \(\delta \in (0,1/2]\).
    For \(t_{\gamma} = O(\gamma^2 + \ln(1/\delta))\) and \(d = O\bigl( \sqrt{t_{\gamma} \ln(1/\delta)} \bigr)\), 
    the \(2d\)-degree polynomial \(q_{\gamma^2/2,t_{\gamma},d}(T_2(x))\) satisfies 
    \begin{equation}
        \underset{x \in [-1,1]}{\sup} \left| \rme^{-(\gamma x)^2} - q_{\gamma^2/2,t_{\gamma},d}(T_2(x))\right| \leq \delta.
    \end{equation}
\end{corollary}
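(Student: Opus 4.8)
The plan is to obtain this corollary as a direct change-of-variables consequence of the exponential approximation of Theorem~\ref{col:exp_approx}, exactly along the lines sketched around Eq.~\eqref{eq:gauss_approx} in the main text. First I would record the elementary identity $T_2(x) = 2x^2 - 1$, which gives $\tfrac{\gamma^2}{2}\bigl(1 + T_2(x)\bigr) = \gamma^2 x^2$ and hence $\rme^{-(\gamma x)^2} = \rme^{-\frac{\gamma^2}{2}(1 + T_2(x))}$. This rewrites the Gaussian as the exponential function $u \mapsto \rme^{-\beta(1+u)}$ with $\beta = \gamma^2/2$, evaluated at $u = T_2(x)$. The degenerate case $\gamma = 0$ is trivial (both sides equal $1$), so I would assume $\gamma > 0$, i.e.\ $\beta > 0$, matching the hypothesis of Theorem~\ref{col:exp_approx}.

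Next I would observe that $T_2$ maps the interval $[-1,1]$ onto $[-1,1]$: for $x \in [-1,1]$ we have $T_2(x) = 2x^2 - 1 \in [-1, 1]$. Consequently the uniform error bound of Theorem~\ref{col:exp_approx} transfers without loss under the substitution $u = T_2(x)$:
\begin{equation*}
  \sup_{x \in [-1,1]} \bigl| \rme^{-\frac{\gamma^2}{2}(1+T_2(x))} - q_{\gamma^2/2,\,t_\gamma,\,d}(T_2(x)) \bigr|
  \;\le\; \sup_{u \in [-1,1]} \bigl| \rme^{-\frac{\gamma^2}{2}(1+u)} - q_{\gamma^2/2,\,t_\gamma,\,d}(u) \bigr| \;\le\; \delta,
\end{equation*}
provided we take $t_\gamma = O(\gamma^2/2 + \ln(1/\delta)) = O(\gamma^2 + \ln(1/\delta))$ and $d = O\bigl(\sqrt{t_\gamma \ln(1/\delta)}\bigr)$, which are precisely the parameter choices inherited from Theorem~\ref{col:exp_approx} (and $\delta \in (0,1/2]$ is inherited as well).

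Finally I would settle the degree claim: since $q_{\beta,t_\beta,d}$ is a polynomial of degree $d$ in its argument and $T_2$ has degree $2$, the composition $q_{\gamma^2/2,\,t_\gamma,\,d}(T_2(x))$ is a polynomial of degree $2d$ in $x$; using the composition identity $T_j(T_2(x)) = T_{2j}(x)$ one can moreover write it explicitly as the Chebyshev series in Eq.~\eqref{eq:gauss_approx}, making the bound $2d$ manifest and showing the series contains only even-index Chebyshev polynomials (consistent with the Gaussian being even). I do not anticipate a genuine obstacle: the only points requiring care are (i) checking that $T_2$ leaves $[-1,1]$ invariant so the sup-norm guarantee carries over with no boundary loss, and (ii) tracking that pre-composing with a degree-$2$ polynomial doubles the degree, which is why the statement asserts a $2d$-degree polynomial rather than a degree-$d$ one.
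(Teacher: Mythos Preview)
Your proposal is correct and follows essentially the same route as the paper's proof: rewrite \(\rme^{-(\gamma x)^2}=\rme^{-\frac{\gamma^2}{2}(1+T_2(x))}\), invoke Theorem~\ref{col:exp_approx} with \(\beta=\gamma^2/2\) under the substitution \(u=T_2(x)\), and then use \(T_j(T_2(x))=T_{2j}(x)\) to read off the degree-\(2d\) Chebyshev form. Your explicit checks that \(T_2([-1,1])\subseteq[-1,1]\) and that the \(\gamma=0\) case is trivial are more careful than the paper, which leaves these implicit.
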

\begin{proof}
    This follows directly from Theorem~\ref{col:exp_approx}.
    Using the identity \(T_2(x) = 2x^2 - 1\), we can rewrite the error function as
    \begin{equation}
        \rme^{-(\gamma x)^2} = \rme^{-\frac{\gamma^2}{2}(T_2(x) + 1)}.
    \end{equation}
    Applying Theorem~\ref{col:exp_approx} with the substitutions \(\beta \mapsto \gamma^2/2\) and \(x \mapsto T_2(x)\),
    we obtain the following \(\delta\)-approximation to the Gaussian function:
    \begin{equation}
        q_{\gamma^2/2,t_{\gamma},d}(T_2(x)) = \rme^{-\gamma^2/2} \sum_{k=0}^{t_{\gamma}} \frac{(-\gamma^2/2)^k}{k!} p_{k,d}(T_2(x)),
        \label{eq:q_gamma}
    \end{equation}
    where \(t_{\gamma} = O(\gamma^2 + \ln(1/\delta))\), and \(d = O\bigl( \sqrt{t_{\gamma} \ln(1/\delta)}\bigr)\).
    We use the identity \(T_j(T_2(x)) = T_{2j}(x)\) to find the explicit formulas of \(p_{k,d}(T_2(x))\) as follows:
    \begin{equation}
        p_{k,d}(T_2(x)) = \sum_{j=0}^{d} c_{k,j} T_j(T_2(x)) = \sum_{j=0}^{d} c_{k,j} T_{2j}(x).
        \label{eq:tj2_t2j}
    \end{equation}
    Substituting Eq.~\eqref{eq:tj2_t2j} into the right hand side of Eq.~\eqref{eq:q_gamma}, we obtain
    \begin{equation}
        q_{\gamma^2/2, t_{\gamma}, d}(T_2(x)) = \rme^{-\gamma^2/2} \sum_{k=0}^{t_{\gamma}} \frac{(-\gamma^2/2)^k}{k!} \left( \sum_{j=0}^{d} c_{k,j} T_{2j}(x) \right).
    \end{equation}
\end{proof}

\begin{corollary}
    (Polynomial approximation to \(\erf(\lambda x)\) \cite[Corollary]{low2017hamiltonian}).
    \label{col:erf_approx}
    Let \(\lambda >0\), and \(\delta \in (0, O(1)]\).
    For \(t_{\lambda} = O(\lambda^2 + \ln(1/\delta))\) and \(d = O\bigl( \sqrt{t_{\lambda} \ln(1/\delta)} \bigr)\),
    the degree-\((2d+1)\) polynomial \(w_{\lambda, t_{\lambda}, d} \left(x\right)\), defined by
    \begin{equation}
        w_{\lambda, t_{\lambda}, d} (x) 
        = \frac{2\lambda \rme^{-\lambda^2/2}}{\sqrt{\pi}} \sum_{k=0}^{t_{\lambda}} \frac{(-\lambda^2/2)^k}{k!}
        \left(\sum_{j=0}^{d} c_{k,j} \left(\frac{T_{2j+1}(x)}{2(2j+1)} - \frac{T_{|2j-1|}(x)}{2(2j-1)} \right)\right),
    \end{equation}
    satisfies
    \begin{equation}
        \underset{x \in [-1,1]}{\sup} \left| \erf(\lambda x) - w_{\lambda, t_{\lambda}, d} (x) \right| \leq \delta.
    \end{equation}
\end{corollary}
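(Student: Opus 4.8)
The plan is to reduce the claim to Corollary~\ref{col:gauss_approx} via the integral representation $\erf(\lambda x) = \tfrac{2\lambda}{\sqrt{\pi}}\int_0^x \rme^{-(\lambda u)^2}\,\mathrm{d}u$, integrating the Gaussian approximant term by term. First I would establish that $w_{\lambda,t_\lambda,d}$ is precisely that integral, i.e.
\begin{equation}
    w_{\lambda,t_\lambda,d}(x) = \frac{2\lambda}{\sqrt{\pi}}\int_0^x q_{\lambda^2/2,\,t_\lambda,\,d}\bigl(T_2(u)\bigr)\,\mathrm{d}u .
\end{equation}
Starting from $q_{\lambda^2/2,t_\lambda,d}(T_2(u)) = \rme^{-\lambda^2/2}\sum_{k=0}^{t_\lambda}\tfrac{(-\lambda^2/2)^k}{k!}\sum_{j=0}^{d}c_{k,j}T_{2j}(u)$ obtained from Eq.~\eqref{eq:tj2_t2j}, I would integrate using the Chebyshev identity $\int T_m(u)\,\mathrm{d}u = \tfrac{1}{2}\bigl(\tfrac{T_{m+1}(u)}{m+1} - \tfrac{T_{m-1}(u)}{m-1}\bigr) + \mathrm{const}$, valid for every even $m$ under the convention $T_d = T_{|d|}$ (the case $m=0$ collapsing to $\int T_0\,\mathrm{d}u = T_1(u)$). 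Evaluating from $0$ to $x$ reproduces exactly the coefficients in the statement: the contribution at $u=0$ vanishes because $T_m(0)=0$ for every odd $m$, and for each $j\ge 0$ both indices $2j+1$ and $|2j-1|$ are odd. The largest Chebyshev index appearing is $T_{2d+1}$, so $w_{\lambda,t_\lambda,d}$ has degree $2d+1$, consistent with Eq.~\eqref{eq:erf_approx_int}.

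Next I would bound the error. For $x\in[-1,1]$ one has $\bigl|\int_0^x g(u)\,\mathrm{d}u\bigr|\le \sup_{u\in[-1,1]}|g(u)|$, hence
\begin{equation}
    \sup_{x\in[-1,1]}\bigl| \erf(\lambda x) - w_{\lambda,t_\lambda,d}(x)\bigr|
    \le \frac{2\lambda}{\sqrt{\pi}}\,\sup_{u\in[-1,1]}\bigl|\rme^{-(\lambda u)^2} - q_{\lambda^2/2,\,t_\lambda,\,d}\bigl(T_2(u)\bigr)\bigr| .
\end{equation}
I would then apply Corollary~\ref{col:gauss_approx} with $\gamma\mapsto\lambda$ and target accuracy $\delta' \coloneqq \tfrac{\sqrt{\pi}}{2\lambda}\,\delta$, which makes the right-hand side at most $\delta$. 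That application is legitimate only when $\delta'\le 1/2$, i.e.\ $\delta \le \lambda/\sqrt{\pi}$, which is exactly the source of the hypothesis $\delta\in(0,O(1)]$.

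Finally I would check that rescaling the accuracy does not spoil the asymptotics. Corollary~\ref{col:gauss_approx} then yields $t_\lambda = O\bigl(\lambda^2 + \ln(1/\delta')\bigr) = O\bigl(\lambda^2 + \ln(1/\delta) + \ln\lambda\bigr)$ and $d = O\bigl(\sqrt{t_\lambda\ln(1/\delta')}\bigr)$; since $\ln\lambda$ is bounded for $\lambda\le 1$ and is $O(\lambda^2)$ for $\lambda\ge 1$, the $\ln\lambda$ term is absorbed, leaving $t_\lambda = O(\lambda^2 + \ln(1/\delta))$ and $d = O\bigl(\sqrt{t_\lambda\ln(1/\delta)}\bigr)$ as claimed, with polynomial degree $2d+1$. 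I expect the main obstacle to be precisely this asymptotic bookkeeping --- confirming that $\ln(1/\delta')$ collapses back to $\Theta(\ln(1/\delta))$ up to constants and pinning down the $O(1)$ threshold on $\delta$ --- together with the careful check that every $u=0$ boundary term vanishes, so that Eq.~\eqref{eq:erf_approx_int} is exactly the correct antiderivative with the right normalization.
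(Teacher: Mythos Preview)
Your proposal is correct and follows essentially the same route as the paper: represent $\erf(\lambda x)$ by the integral $\tfrac{2\lambda}{\sqrt\pi}\int_0^x \rme^{-(\lambda u)^2}\,\mathrm{d}u$, replace the integrand by the Gaussian approximant $q_{\lambda^2/2,t_\lambda,d}(T_2(u))$ from Corollary~\ref{col:gauss_approx}, and integrate term by term using the Chebyshev antiderivative identity to obtain the stated $w_{\lambda,t_\lambda,d}$. In fact you supply more than the paper does---the paper's proof simply writes down the integral and the Chebyshev identity without ever bounding the error or tracking the rescaling $\delta\mapsto\delta'=\tfrac{\sqrt\pi}{2\lambda}\delta$, whereas you make both steps explicit (and your observation that the constraint $\delta'\le 1/2$ is what forces $\delta\in(0,O(1)]$ is exactly the right reading of that hypothesis).
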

\begin{proof}
    The error function is defined as
    \begin{equation}
        \erf(\lambda x) 
        \coloneqq \frac{2}{\pi} \int_{0}^{\lambda x} \rme^{-u^2} \mathrm{d}u 
        = \frac{2\lambda}{\sqrt{\pi}} \int_{0}^{x} \rme^{-(\lambda u)^2} \mathrm{d}u.
    \end{equation}
    Approximating the integrand \(\rme^{-(\lambda u)^2}\) with \(q_{\lambda^2/2,t_{\lambda},d}(T_2(u))\) in Corollary~\ref{col:gauss_approx}, we get
    \begin{equation}
        w_{\lambda, t_{\lambda}, d} \left(x\right) 
        = \frac{2\lambda}{\sqrt{\pi}} \int_{0}^{x} q_{\lambda^2/2,t_{\lambda},d}(T_2(u)) \mathrm{d}u
        = \frac{2\lambda \rme^{-\lambda^2/2}}{\sqrt{\pi}} \sum_{k=0}^{t_{\lambda}} \frac{(-\lambda^2/2)^k}{k!} 
        \int_{0}^{x} \left( \sum_{j=0}^{d} c_{k,j} T_{2j}(u) \right) \mathrm{d}u.
        \label{eq:w_err}
    \end{equation}
    Using the integral identity of the Chebyshev polynomials \cite{mason2002chebyshev} in the right hand side of Eq.~\eqref{eq:w_err},
    \begin{equation}
        \int_{0}^{x} T_j(u) du = \frac{T_{j+1}(x)}{2(j+1)} - \frac{T_{|j-1|}(x)}{2(j-1)} \quad (j \neq 1),
    \end{equation}
    we have the \(\delta\)-approximation to the error function \(\erf(\lambda x)\):
    \begin{equation}
        w_{\lambda, t_{\lambda}, d} \left(x\right) 
        = \frac{2\lambda \rme^{-\lambda^2/2}}{\sqrt{\pi}} \sum_{k=0}^{t_{\lambda}} \frac{(-\lambda^2/2)^k}{k!}
        \left(\sum_{j=0}^{d} c_{k,j} \left(\frac{T_{2j+1}(x)}{2(2j+1)} - \frac{T_{|2j-1|}(x)}{2(2j-1)} \right)\right).
    \end{equation}
\end{proof}

\section{The Laurent Polynomial Approximations to Sine Functions}
\label{appendix:sine_ver}
The Laurent polynomial approximations to \(\sin^n y\), \(\rme^{-\beta(1 + \sin y)}\), \(\rme^{-(\gamma \sin y)^2}\), and \(\erf(\lambda \sin y)\)
can be straightforwardly obtained from their cosine versions in Section~\ref{sec:poly_approx}, using the identity \(\sin y = \cos (\pi/2 - y)\).
For simplicity, we omitted the explicit formulas for these sine-based functions in the main text.
Here, we present them explicitly.

The Laurent polynomial approximation to \(\sin^n y\) with an error \(\delta\) is \(p_{n,d}(\sin y)\) as follows:
\begin{equation}
    \sin^n y \approx
    p_{n,d}(\sin y) = \sum_{j=0}^{d} \frac{c_{n,j}}{2} \cdot \mathrm{i}^j \left((-z)^{j} + z^{-j}\right),
    \label{eq:mono_sin_approx}
\end{equation}
where we have used \(\rme^{\mathrm{i}(\pi/2 - y)} = \mathrm{i}z^{-1}\) and \(\rme^{-\mathrm{i}(\pi/2 - y)} = -\mathrm{i}z\).

Using Eq.~\eqref{eq:mono_sin_approx}, the \(\delta\)-approximation to \(\rme^{-\beta(1 + \sin y)}\) is given by
\begin{equation}
    \rme^{-\beta (1+\sin y)} \approx \rme^{-\beta} \sum_{k=0}^{t_{\beta}} \frac{(-\beta)^k}{k!} p_{k,d}(\sin y),
    \label{eq:exp_sin_approx}
\end{equation}
where \(t_{\beta} = O(\beta + \ln(1/\delta))\), and \(d = O\bigl( \sqrt{t_{\beta} \ln(1/\delta)} \bigr)\).

For the Gaussian function \(\rme^{-(\gamma \sin y)^2}\), we have the following \(\delta\)-approximation:
\begin{equation}
    \rme^{-(\gamma \sin y)^2} \approx \rme^{-\gamma^2/2} \sum_{k=0}^{t_{\gamma}} \frac{(-\gamma^2/2)^k}{k!} p_{k,d}(T_2(\sin y)),
    \label{eq:gauss_sin_approx}
\end{equation}
where \(t_{\gamma} = O(\gamma^2 + \ln(1/\delta))\), and \(d = O\bigl( \sqrt{t_{\gamma} \ln(1/\delta)} \bigr)\).
Using the identity \(T_2(\sin y) = T_2(\cos(\pi/2 - y)) = \cos (2(\pi/2 - y))\), we can simplify \(p_{k,d}(T_2(\sin y))\) as
\begin{equation}
    p_{k,d}(T_2(\sin y)) = \sum_{j=0}^{d} \frac{c_{k,j}}{2} (-1)^j \left(z^{2j} + z^{-2j}\right).
    \label{eq:gauss_sin_approx_t2}
\end{equation}

The Laurent polynomial approximation to \(\erf(\lambda \sin y)\) with an error \(\delta\) is
\begin{equation}
    \erf(\lambda \sin y) \approx
    \frac{2\lambda \rme^{-(\lambda/2)^2}}{\sqrt{\pi}} \sum_{k=0}^{t_{\lambda}} \frac{(-(\lambda/2)^2)^k}{k!} \int_{0}^{\sin y} p_{k,d}(T_2(u)) \mathrm{d}u,
    \label{eq:erf_sin_approx}
\end{equation}
where \(t_{\lambda} = O(\lambda^2 + \ln(1/\delta))\), and \(d = O\bigl( \sqrt{t_{\lambda} \ln(1/\delta)} \bigr)\).
The integral \(\int_{0}^{\sin y} p_{k,d}(T_2(u)) \mathrm{d}u\) simplifies to
\begin{align}
    \frac{c_{k,0}}{2} (-\mathrm{i}) (z - z^{-1}) + \sum_{j=1}^{d} \frac{c_{k,j}}{2\mathrm{i}} (-1)^j
    \left(\frac{z^{2j+1} - z^{-2j-1}}{2(2j+1)} + \frac{z^{2j-1} - z^{-2j+1}}{2(2j-1)}\right).
    \label{eq:erf_sin_approx_int}
\end{align}

\section{Proofs to the Main Theorems}
We prove Theorems~\ref{thm:linear_comb} and~\ref{thm:product} in this section. \\
\label{appendix:lc_prod_proof}

\textbf{Theorem~\ref{thm:linear_comb}}
\emph{
    Let \(F: [-1,1] \times \mathbb{R} \rightarrow \mathbb{R}\) be a linear combination of elements 
    \(f_i \in \mathcal{F}_{\mathrm{all}}\) and \(g_j \in \mathcal{G}_{\mathrm{all}}\), i.e.,
    \begin{equation}
        F(x,y) = \sum_{i=1}^{\mathcal{N}} a_i f_i(x) + \sum_{j=1}^{\mathcal{M}} b_j g_j(y),
    \end{equation}
    for real coefficients \(a_i, b_j\), where \(\mathcal{N} \geq 0\) and \(\mathcal{M} \geq 0\) represent the constant number of functions.
    Let \(h_j(z)\) denote the Laurent polynomial representation of each \(g_j(y)\).
    The degree of \(F\) is a tuple \((n_l,m_l)\), where
    \begin{equation}
        n_l \coloneqq \max_{1 \leq i \leq \mathcal{N}} \deg(f_i), \quad m_l \coloneqq \max_{1 \leq j \leq \mathcal{M}} \deg(h_j).
    \end{equation}
    If the coefficients \(a_i\) and \(b_j\) satisfy the following sufficient condition for a constant \(\mathcal{C}\):
    \begin{equation}
        \sum_{i=1}^{\mathcal{N}} |a_i| + \sum_{j=1}^{\mathcal{M}} |b_j| = \mathcal{C},
        \label{eq:appendix_lc_assump}
    \end{equation}
    then, for any \(\delta \in (0, O(1)]\),
    the function \(F\) can be \(\delta\)-approximated by a polynomial of degree
    \(\bigl(O\bigl(\sqrt{n_l \ln(1/\delta)}\bigr), O\bigl(\sqrt{m_l \ln(1/\delta)}\bigr)\bigr)\). 
}
\begin{proof}
    \(h_j(z)\) is a Laurent polynomial representation of \(g_j(y)\) with a change of variables, i.e., \(g_j(y) = h_j(z)\).
    Then, \(F(x,y)\) can be rewritten as its Laurent polynomial representation \(F'(x,z)\) as follows:
    \begin{equation}
        F(x,y) = F'(x,z) \coloneqq \sum_{i=1}^{\mathcal{N}} a_i f_i(x) + \sum_{j=1}^{\mathcal{M}} b_j h_j(z).
    \end{equation}
    As \(f_i\) and \(h_j\) are polynomial-approximable functions,
    \(f_i\) and \(h_j\) can be \(\delta\)-approximated by polynomials of degree \(O\bigl( \sqrt{\deg(f_i) \ln(1/\delta)} \bigr)\) and 
    Laurent polynomials of degree \(O\bigl( \sqrt{\deg(h_j) \ln(1/\delta)} \bigr)\), respectively.

    The proof proceeds in two cases, depending on the value of \(\mathcal{C}\).
    First, we consider when \(0 \leq \mathcal{C} \leq 1\).
    Let \(\hat{f}_i\) and \(\hat{h}_j\) denote the \(\delta\)-approximations to \(f_i\) and \(h_j\) for all \(i\) and \(j\), respectively, so that
    \begin{equation}
        \sup_{x \in [-1,1]} |f_i(x) - \hat{f}_i(x)| \leq \delta, \quad
        \sup_{z \in \mathbb{T}} |h_j(z) - \hat{h}_j(z)| \leq \delta,
        \label{eq:appendix_lc_delta_approx}
    \end{equation}
    where \(\mathbb{T} \coloneqq \{z \in \mathbb{C}: |z| = 1\}\).
    Consider the linear combination of \(\hat{f}_i\) and \(\hat{h}_j\) as an approximation to \(F'(x,z)\), 
    \begin{equation}
        \hat{F}'(x,z) \coloneqq \sum_{i=1}^{\mathcal{N}} a_i \hat{f}_i(x) + \sum_{j=1}^{\mathcal{M}} b_{j} \hat{h}_j(z),
    \end{equation}
    where \(\deg(\hat{F}') = \bigl(O\bigl(\sqrt{n_l \ln(1/\delta)}\bigr), O\bigl(\sqrt{m_l \ln(1/\delta)}\bigr)\bigr)\).
    Then, the approximation error \(|F' - \hat{F}'|\) is upper bounded by \(\delta\), as shown below:
    \begin{equation}
    \begin{split}
        \sup_{x \in [-1,1], z \in \mathbb{T}} |F'(x,z) - \hat{F}'(x,z)| 
        & \leq \sup_{x \in [-1,1]} \left| \sum_{i=1}^{\mathcal{N}} a_i (f_i(x) - \hat{f}_i(x)) \right| 
        + \sup_{z \in \mathbb{T}} \left| \sum_{j=1}^{\mathcal{M}} b_j (h_j(z) - \hat{h}_j(z)) \right| \\
        & \leq \sum_{i=1}^{\mathcal{N}} |a_i| \cdot \sup_{x \in [-1,1]} |f_i(x) - \hat{f}_i(x)| 
        + \sum_{j=1}^{\mathcal{M}} |b_j| \cdot \sup_{z \in \mathbb{T}} |h_j(z) - \hat{h}_j(z)| \\
        & \leq \delta \cdot \left(\sum_{i=1}^{\mathcal{N}} |a_i| + \sum_{j=1}^{\mathcal{M}} |b_j|\right) \\
        & \leq \delta,
    \end{split}
    \end{equation}
    where the third inequality follows from the \(\delta\)-approximations in Eq.~\eqref{eq:appendix_lc_delta_approx},
    and the last inequality follows from the assumption in Eq.~\eqref{eq:appendix_lc_assump}.

    When \(\mathcal{C} > 1\), we require smaller approximation errors for \(f_i\) and \(h_j\).
    For all \(i\) and \(j\), the approximation errors are upper bounded by \(\delta/\mathcal{C}\):
    \begin{equation}
        \sup_{x \in [-1,1]} |f_i(x) - \hat{f}_i(x)| \leq \delta/\mathcal{C}, \quad
        \sup_{z \in \mathbb{T}} |h_j(z) - \hat{h}_j(z)| \leq \delta/\mathcal{C}.
    \end{equation}
    Then, the degree required to \(\delta\)-approximate \(F'(x,z)\) increases slightly, but the overall scaling is unaffected.
    In other words,
    \begin{equation}
        \deg(\hat{F}') 
        = \bigl(O\bigl(\sqrt{n_l \ln(\mathcal{C}/\delta)}\bigr), O\bigl(\sqrt{m_l \ln(\mathcal{C}/\delta)}\bigr)\bigr) 
        = \bigl(O\bigl(\sqrt{n_l \ln(1/\delta)}\bigr), O\bigl(\sqrt{m_l \ln(1/\delta)}\bigr)\bigr),
    \end{equation}
    as \(\mathcal{C}\) is a constant.
    Therefore, the total error is upper bounded by \(\delta/\mathcal{C} \cdot \mathcal{C} \leq \delta\), which completes the proof. \\
\end{proof}

\textbf{Theorem~\ref{thm:product}}
\emph{
    Let \(G: [-1,1] \times \mathbb{R} \rightarrow \mathbb{R}\) be a product of elements
    \(f_i \in \mathcal{F}_{\mathrm{all}}\) and \(g_j \in \mathcal{G}_{\mathrm{all}}\), i.e.,
    \begin{equation}
        G(x,y) = \prod_{i=1}^{\mathcal{N}} f_i(x) \prod_{j=1}^{\mathcal{M}} g_j(y),
    \end{equation}
    where \(\mathcal{N} \geq 0\) and \(\mathcal{M} \geq 0\) stand for the constant number of functions.
    Let \(h_j(z)\) denote the Laurent polynomial representation of each \(g_j(y)\).
    The degree of \(G\) is the tuple \((n_p,m_p)\), given by the sums 
    \begin{equation}
        n_p \coloneqq \sum_{i=1}^{\mathcal{N}} \deg(f_i), \quad m_p \coloneqq \sum_{j=1}^{\mathcal{M}} \deg(h_j).
        \label{eq:appendix_prod_degree}
    \end{equation}
    Then, for any \(\delta \in (0, O(1)]\), 
    the function \(G\) can be \(\delta\)-approximated by a polynomial of degree
    \(\bigl(O\bigl(\sqrt{n_p \ln(1/\delta)}\bigr), O\bigl(\sqrt{m_p \ln(1/\delta)}\bigr)\bigr)\).
}
\begin{proof}
    Let \(G'(x,z)\) denote the Laurent polynomial representation of \(G(x,y)\):
    \begin{equation}
        G(x,y) = G'(x,z) \coloneqq \prod_{i=1}^{\mathcal{N}} f_i(x) \prod_{j=1}^{\mathcal{M}} h_j(z).
    \end{equation}
    Let \(\hat{G}'\) denote a \(\delta\)-approximation to \(G'\), defined by
    \begin{equation}
        \hat{G}'(x,z) \coloneqq \prod_{i=1}^{\mathcal{N}} \hat{f}_i(x) \prod_{j=1}^{\mathcal{M}} \hat{h}_j(z),
    \end{equation}
    where \(\hat{f}_i\) and \(\hat{h}_j\) are approximations to \(f_i\) and \(h_j\), respectively.
    We note that \(|f_i| \leq 1\) for all \(x \in [-1,1]\) and \(|h_j| \leq 1\) for all \(z \in \mathbb{T}\), 
    and assume that the approximation errors are upper bounded by \(\delta/(\mathcal{N} + \mathcal{M})\), i.e.,
    \begin{equation}
        \sup_{x \in [-1,1]} |f_i(x) - \hat{f}_i(x)| \leq \frac{\delta}{\mathcal{N} + \mathcal{M}}, \quad 
        \sup_{z \in \mathbb{T}} |h_j(z) - \hat{h}_j(z)| \leq \frac{\delta}{\mathcal{N} + \mathcal{M}}.
        \label{eq:prod_d_approx}
    \end{equation}
    Then \(\deg(\hat{G}') = \bigl(O\bigl(\sqrt{n_p \ln((\mathcal{N}+\mathcal{M})/\delta)}\bigr), O\bigl(\sqrt{m_p \ln((\mathcal{N}+\mathcal{M})/\delta)}\bigr)\bigr)\),
    where \(n_p\) and \(m_p\) are defined in Eq.~\eqref{eq:appendix_prod_degree}.
    Because \(\mathcal{N}\) and \(\mathcal{M}\) are constants, this factor does not affect the scaling, and we obtain
    \(\deg(\hat{G}') = \bigl(O\bigl(\sqrt{n_p \ln(1/\delta)}\bigr), O\bigl(\sqrt{m_p \ln(1/\delta)}\bigr)\bigr)\).
    To upper bound the approximation error \(|G' - \hat{G}'|\), we proceed inductively, starting with the first term as follows:
    \begin{equation}
    \begin{split}
        \sup_{x \in [-1,1], z \in \mathbb{T}} |G'(x,z) - \hat{G}'(x,z)| 
        & = \sup_{x \in [-1,1], z \in \mathbb{T}} 
        \left| \prod_{i=1}^{\mathcal{N}} f_i(x) \prod_{j=1}^{\mathcal{M}} h_j(z) 
        - \prod_{i=1}^{\mathcal{N}} \hat{f}_i(x) \prod_{j=1}^{\mathcal{M}} \hat{h}_j(z) \right| \\
        &
        \begin{aligned}
            \leq \sup_{x \in [-1,1], z \in \mathbb{T}} 
            & \left| \prod_{i=1}^{\mathcal{N}} f_i(x) \prod_{j=1}^{\mathcal{M}} h_j(z) 
            - \hat{f}_1(x) \prod_{i=2}^{\mathcal{N}} f_i(x) \prod_{j=1}^{\mathcal{M}} h_j(z) \right. \\
            & + \left. \hat{f}_1(x) \prod_{i=2}^{\mathcal{N}} f_i(x) \prod_{j=1}^{\mathcal{M}} h_j(z)
            - \prod_{i=1}^{\mathcal{N}} \hat{f}_i(x) \prod_{j=1}^{\mathcal{M}} \hat{h}_j(z) \right|
        \end{aligned} \\
        & 
        \begin{aligned}
            \leq & \sup_{x \in [-1,1], z \in \mathbb{T}} 
            \left| \prod_{i=1}^{\mathcal{N}} f_i(x) \prod_{j=1}^{\mathcal{M}} h_j(z) 
            - \hat{f}_1(x) \prod_{i=2}^{\mathcal{N}} f_i(x) \prod_{j=1}^{\mathcal{M}} h_j(z) \right| \\
            & + \sup_{x \in [-1,1], z \in \mathbb{T}} 
            \left| \hat{f}_1(x) \prod_{i=2}^{\mathcal{N}} f_i(x) \prod_{j=1}^{\mathcal{M}} h_j(z) 
            - \prod_{i=1}^{\mathcal{N}} \hat{f}_i(x) \prod_{j=1}^{\mathcal{M}} \hat{h}_j(z) \right| \\
        \end{aligned} \\
        & \leq \frac{\delta}{\mathcal{N}+\mathcal{M}} 
        + \sup_{x \in [-1,1], z \in \mathbb{T}} 
        \left| \hat{f}_1(x) \prod_{i=2}^{\mathcal{N}} f_i(x) \prod_{j=1}^{\mathcal{M}} h_j(z) 
        - \prod_{i=1}^{\mathcal{N}} \hat{f}_i(x) \prod_{j=1}^{\mathcal{M}} \hat{h}_j(z) \right|,
    \end{split}
    \end{equation}
    where the last inequality follows from Eq.~\eqref{eq:prod_d_approx}.
    Repeating this procedure \((\mathcal{N} + \mathcal{M} - 1)\) more times yields
    \begin{equation}
        \sup_{x \in [-1,1], z \in \mathbb{T}} \left|G'(x,z) - \hat{G}'(x,z)\right| 
        \leq \underbrace{\frac{\delta}{\mathcal{N}+\mathcal{M}} + \cdots + \frac{\delta}{\mathcal{N}+\mathcal{M}}}_{(\mathcal{N}+\mathcal{M}) \text{ terms}} \leq \delta.
    \end{equation}
    where the second inequality follows since \(\mathcal{N} + \mathcal{M} = O(1)\).
\end{proof}

\section{Review of QSP}
\label{appendix:review_qsp}
In this appendix, we review and provide the formal theorems of QSP used in Section~\ref{sec:synthesizing}.
The QSP theorem is stated as follows:
\begin{theorem}
    (Quantum Signal Processing in \(\mathrm{SU}(2)\) \cite[Theorem 3]{gilyen2019quantum}).
    \label{thm:qsp}
    Let \(d \in \mathbb{N}\). Then there exists a set of phase factors \(\Phi := (\phi_0, \cdots, \phi_d) \in [-\pi, \pi)^{d+1}\) such that
    \begin{equation}
        \label{eq:qsp}
        \begin{aligned}
                U_\Phi(x) = \rme^{\mathrm{i} \phi_0 \sigma_z} 
                \prod_{j=1}^{d} \left[ W(x) \rme^{\mathrm{i} \phi_j \sigma_z} \right]
                = \left( \begin{array}{cc}
                P(x) & \mathrm{i} Q(x) \sqrt{1 - x^2}\\
                \mathrm{i} Q^*(x) \sqrt{1 - x^2} & P^*(x)
                \end{array} \right),
        \end{aligned}
    \end{equation}
    where \(x \in [-1,1]\), and 
    \begin{equation}
        W(x) = \rme^{\mathrm{i} \arccos(x) \sigma_x} = \left(
            \begin{array}{cc}
                {x} & {\mathrm{i} \sqrt{1-x^{2}}} \\ 
                {\mathrm{i} \sqrt{1-x^{2}}} & {x}\end{array}
        \right),
    \end{equation}
    if and only if:
    \begin{enumerate}[label={(\arabic*)}]
        \item \(P, Q \in \mathbb{C}[x]\) with \(\deg(P) \leq d\) and \(\deg(Q) \leq d-1\).
        \item \(P\) has parity \((d \bmod 2)\) and \(Q\) has parity \((d-1 \bmod 2)\).
        \item \(|P(x)|^2 + (1-x^2) |Q(x)|^2 = 1\), \(\forall x \in [-1, 1]\).
    \end{enumerate}
\end{theorem}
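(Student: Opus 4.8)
The statement is an ``if and only if'', which I would prove by establishing both implications by induction on the number of layers $d$, using the factorization $U^{(d)}(x) = U^{(d-1)}(x)\,W(x)\,\rme^{\mathrm{i}\phi_d\sigma_z}$, where $U^{(k)}(x)$ denotes the product $\rme^{\mathrm{i}\phi_0\sigma_z}\prod_{j=1}^{k}\bigl[W(x)\rme^{\mathrm{i}\phi_j\sigma_z}\bigr]$ built from $(\phi_0,\dots,\phi_k)$ (so $U^{(d)}=U_\Phi$). One preliminary remark organizes everything: each factor $W(x)$ and each $\rme^{\mathrm{i}\phi_j\sigma_z}$ is a $2\times 2$ matrix $M$ with $M_{22}=\overline{M_{11}}$ and $M_{21}=-\overline{M_{12}}$, and the set of such matrices is closed under multiplication; also $\sqrt{1-x^2}$ is real on $[-1,1]$. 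Hence the matrix shape claimed in Eq.~\eqref{eq:qsp} is precisely the assertion that $U^{(d)}$ has this conjugation symmetry and that its $(1,2)$ entry divided by $\mathrm{i}\sqrt{1-x^2}$ is an honest polynomial.

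For the forward implication (phase factors $\Rightarrow$ (1)--(3)), the base case $d=0$ is immediate: $U^{(0)}(x)=\rme^{\mathrm{i}\phi_0\sigma_z}$ yields $P=\rme^{\mathrm{i}\phi_0}$ (constant, even parity, unit modulus) and $Q\equiv 0$. For the inductive step I would write $U^{(d-1)}(x)$ in the form of Eq.~\eqref{eq:qsp} and multiply by $W(x)\rme^{\mathrm{i}\phi_d\sigma_z}$, getting the recursions $P_d=\rme^{\mathrm{i}\phi_d}\bigl(xP_{d-1}-(1-x^2)Q_{d-1}\bigr)$ and $Q_d=\rme^{-\mathrm{i}\phi_d}\bigl(P_{d-1}+xQ_{d-1}\bigr)$. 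From these the degree bounds (1) and parities (2) follow by inspection, and (3) is just $\det U^{(d)}=1$, true because $\det W(x)=1$, $\det\rme^{\mathrm{i}\phi_d\sigma_z}=1$, and $\det U^{(d-1)}=1$ inductively.

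The converse, from (1)--(3) to the phase factors, is the substantive part; here I ``peel off'' one layer at a time. Given $P,Q$ obeying (1)--(3), I would first observe that $|P(x)|^2+(1-x^2)|Q(x)|^2$ and $1$ are polynomials agreeing on $[-1,1]$, hence equal identically; equating the $x^{2d}$ coefficients forces $|p_d|=|q_{d-1}|$, where $p_d$ and $q_{d-1}$ are the leading coefficients of $P$ and $Q$. When $q_{d-1}\neq 0$, I would define $\phi_d$ by $\rme^{2\mathrm{i}\phi_d}:=p_d/q_{d-1}$ (modulus one) and invert the forward recursion, that is, set $U^{(d-1)}:=U^{(d)}\,\rme^{-\mathrm{i}\phi_d\sigma_z}\,W(x)^{-1}$, equivalently $P_{d-1}=\rme^{-\mathrm{i}\phi_d}xP+\rme^{\mathrm{i}\phi_d}(1-x^2)Q$ and $Q_{d-1}=\rme^{\mathrm{i}\phi_d}xQ-\rme^{-\mathrm{i}\phi_d}P$. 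The choice of $\phi_d$ is exactly what kills the $x^{d+1}$ coefficient of $P_{d-1}$ and the $x^d$ coefficient of $Q_{d-1}$; parity then kills the next coefficient of each, so $\deg P_{d-1}\le d-1$ and $\deg Q_{d-1}\le d-2$. The parities shift correctly, the conjugation symmetry of $U^{(d-1)}$ persists by closure of the matrix set, and (3) for $(P_{d-1},Q_{d-1})$ follows from $\det U^{(d-1)}=\det U^{(d)}=1$. The induction hypothesis applied to $(P_{d-1},Q_{d-1})$ returns $(\phi_0,\dots,\phi_{d-1})$; the base case $d=0$ has $|P|^2=1$ and $\deg Q\le -1$, so $Q\equiv 0$ and $P$ is a unit-modulus constant, giving $\phi_0=\arg P$.

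The main obstacle I anticipate is the degenerate branch of the converse: if $q_{d-1}=0$ then $|p_d|=0$ too, so there is no ratio to read $\phi_d$ from, and by parity both $P$ and $Q$ fall an extra degree below $d$ and $d-1$. I would handle this by taking $\phi_d=0$ and checking, using only that parity-forced degree drop, that $P_{d-1}=xP+(1-x^2)Q$ and $Q_{d-1}=xQ-P$ still satisfy $\deg P_{d-1}\le d-1$ and $\deg Q_{d-1}\le d-2$, so the recursion continues. Otherwise the remaining work is clerical: tracking parities consistently down the recursion, and using ``equality on $[-1,1]$ implies equality as polynomials'' to justify the leading-coefficient arguments, since (3) is hypothesized only on the interval.
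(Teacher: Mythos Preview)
Your argument is correct and is precisely the standard induction/layer-stripping proof from the cited source \cite[Theorem~3]{gilyen2019quantum}: forward direction by multiplying out one more $W(x)\rme^{\mathrm{i}\phi_d\sigma_z}$, converse by reading off $\rme^{2\mathrm{i}\phi_d}=p_d/q_{d-1}$ from the top-degree cancellation in $|P|^2+(1-x^2)|Q|^2=1$ and peeling. Your handling of the degenerate branch ($q_{d-1}=0\Rightarrow p_d=0$, take $\phi_d=0$, use parity to drop two degrees) is also the standard fix.

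There is nothing to compare against in this paper, however: Theorem~\ref{thm:qsp} is stated in Appendix~\ref{appendix:review_qsp} purely as a quoted result and is not proved here. The paper simply cites Gily\'en et al.\ and moves on to how the QSP sequence is used (Eqs.~\eqref{eq:qsp_matrix}--\eqref{eq:varphi} and Theorem~\ref{thm:real_qsp}). So your proposal is not an alternative to the paper's proof; it \emph{is} the proof the paper defers to the reference.
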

Two types of operations are used in Eq.~\eqref{eq:qsp}: the signal operator \(W(x)\) and sequence of signal-processing operators \(\{\rme^{\mathrm{i}\phi_j \sigma_z}\}_{j=0}^{d}\) \cite{motlagh2024generalized}.
When the signal \(x\) is a matrix \(\ham\) of dimension \(2^N\), the block-encoding \(U_{\ham}\) and \(\{\rme^{\mathrm{i} \phi_j U_{\Pi}}\}_{j=0}^{d}\) correspond to the signal operator and signal-processing operators, 
respectively \cite{dong2021efficient}, where \(U_{\Pi}\) is defined as
\begin{equation}
    U_{\Pi} \coloneqq 2 \ket{0^M} \bra{0^M} \otimes I_N - I_{N+M}.
    \label{eq:cheby_spo}
\end{equation}
Here, \(M\) is the required number of ancilla qubits for block-encoding.
By noticing this correspondence, we can obtain the following QSP sequence that implements a block-encoded matrix polynomial \(P(\ham)\), provided \(P\) satisfies the three conditions in Theorem~\ref{thm:qsp} (for details, see \cite[Section II B]{dong2021efficient}):
\begin{equation}
    U_{\tilde{\Phi}} 
    = (-\mathrm{i})^d \left[ \prod_{j=0}^{d-1} \left( \rme^{\mathrm{i}\varphi_j U_{\Pi}} U_{\ham} \right) \right] \rme^{\mathrm{i} \varphi_d U_{\Pi}}.
    \label{eq:qsp_matrix}
\end{equation}
Note that the phase angles \(\varphi_j\) in Eq.~\eqref{eq:qsp_matrix} are different from \(\phi_j\) in Eq.~\eqref{eq:qsp}.
The relationship between them is given by:
\begin{equation}
    \varphi_j = \begin{cases}
        \phi_0 + \pi/4 & \quad (j = 0), \\
        \phi_j + \pi/2 & \quad (1 \leq j \leq d - 1), \\
        \phi_d + \pi/4 & \quad (j = d).
    \end{cases}
    \label{eq:varphi}
\end{equation}

In this work, we focus on implementing the polynomial-approximable functions that are real.
For implementing real polynomials, the real QSP theorem is employed:
\begin{theorem}
    (Real Quantum Signal Processing \cite[Corollary 10]{gilyen2019quantum}).
    \label{thm:real_qsp}
    Let \(\Re(P)\) denote the real part of a complex polynomial \(P \in \mathbb{C}[x]\) of degree \(d \geq 1\).
    If \(\Re(P)\) satisfies
    \begin{enumerate}[label={(\arabic*)}]
        \item \(\Re(P)\) has parity \((d \bmod 2)\), and
        \item \(|\Re(P(x))| \leq 1, \forall x \in [-1,1]\),
    \end{enumerate}
    then there exists a polynomial \(P \in \mathbb{C}[x]\) that satisfies all conditions in Theorem~\ref{thm:qsp}.
\end{theorem}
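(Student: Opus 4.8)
The plan is to leave the real part $f \coloneqq \Re(P)$ untouched and to replace the imaginary part of $P$, together with a suitable companion polynomial $Q$, so that the three conditions of Theorem~\ref{thm:qsp} are met; since a QSP phase sequence exists for any pair $(P,Q)$ obeying those conditions, exhibiting such a pair is all that is needed.

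First I would record what the hypotheses give about $f$: it is a real polynomial with $\deg f \le d$, it has parity $d \bmod 2$, and $|f(x)| \le 1$ on $[-1,1]$. Hence $1 - f(x)^2$ is nonnegative on $[-1,1]$, it is even (because $f$ has definite parity, so $f^2$ is even), and its degree is $2\deg f \le 2d$. The analytic core is to certify this nonnegativity by an explicit weighted-sum-of-squares decomposition with tight degrees and the correct parities,
\begin{equation}
    1 - f(x)^2 = g(x)^2 + (1 - x^2)\, q(x)^2,
    \label{eq:proposal_ml}
\end{equation}
with real polynomials $g$ of degree $\le d$ and parity $d \bmod 2$, and $q$ of degree $\le d - 1$ and parity $(d-1)\bmod 2$. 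I would obtain \eqref{eq:proposal_ml} from the classical Markov--Luk\'acs representation of polynomials nonnegative on an interval. To pin down the parities as well, I would substitute $y = x^2$: writing $1 - f(x)^2 = \tilde R(x^2)$ with $\tilde R \ge 0$ on $[0,1]$, the interval form of the Luk\'acs representation on $[0,1]$ supplies real squares whose pullbacks under $x \mapsto x^2$ automatically carry the required parity ($g$ even and $q$ odd when $d$ is even, and the other way round when $d$ is odd), while the elementary observation that $\deg f$ is even exactly when $d$ is even keeps the degree count consistent. The degenerate case $f \equiv \pm 1$, possible only when $d$ is even, is trivial: take $g \equiv q \equiv 0$.

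With \eqref{eq:proposal_ml} in hand, I would set $\widehat P(x) \coloneqq f(x) + \mathrm{i}\, g(x)$ and $\widehat Q(x) \coloneqq q(x)$ and verify the conditions of Theorem~\ref{thm:qsp} in turn: condition (1) from $\deg g \le d$ and $\deg q \le d - 1$; condition (2) from the parity statements in \eqref{eq:proposal_ml} together with the parity of $f$; and condition (3) because, for every $x \in [-1,1]$,
\begin{equation}
    |\widehat P(x)|^2 + (1 - x^2)\,|\widehat Q(x)|^2 = f(x)^2 + g(x)^2 + (1 - x^2)\, q(x)^2 = 1 .
\end{equation}
Since $\Re(\widehat P) = f = \Re(P)$, the pair $(\widehat P,\widehat Q)$ witnesses the claim, and by the sufficiency direction of Theorem~\ref{thm:qsp} a QSP phase sequence realizing $\widehat P$ exists.

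The step I expect to be the main obstacle is \eqref{eq:proposal_ml}: producing a decomposition that simultaneously meets the degree ceilings $\deg g \le d$, $\deg q \le d - 1$ \emph{and} the parity requirements. That some weighted-sum-of-squares form exists is classical, but the tight ``half-degree'' bounds and the parity control need care --- either through the $y = x^2$ reduction combined with the Luk\'acs representation on $[0,1]$, or through a direct analysis of the roots of $1 - f(x)^2$ in $\mathbb{C}$ (real roots in $(-1,1)$ occur with even multiplicity; the values at $x = \pm 1$, where $1 - f(\pm 1)^2$ need not vanish, must be tracked; and the remaining roots must be grouped into complex-conjugate and $\pm$-symmetric clusters so that the two factors split cleanly by parity). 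Everything after \eqref{eq:proposal_ml} is routine bookkeeping.
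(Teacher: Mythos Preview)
The paper does not actually prove this statement: Theorem~\ref{thm:real_qsp} is stated in Appendix~\ref{appendix:review_qsp} purely as a review result, cited verbatim from \cite[Corollary~10]{gilyen2019quantum}, with no accompanying argument. So there is no ``paper's own proof'' to compare against.

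That said, your proposal is correct and is essentially the standard proof from the cited source. The construction $\widehat P = f + \mathrm{i} g$, $\widehat Q = q$ with $1 - f^2 = g^2 + (1-x^2)q^2$ is exactly how Gily\'en et al.\ establish the result, and your reduction via $y = x^2$ to the Luk\'acs representation on $[0,1]$ is the clean way to force the parities: when $d$ is even one uses the even-degree form $R(y) = A(y)^2 + y(1-y)B(y)^2$ and reads off $g(x) = A(x^2)$, $q(x) = xB(x^2)$; when $d$ is odd one uses the odd-degree form $R(y) = yC(y)^2 + (1-y)D(y)^2$ and reads off $g(x) = xC(x^2)$, $q(x) = D(x^2)$. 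In both cases the degree bounds and parities fall out as you claim. Your identification of \eqref{eq:proposal_ml} as the only nontrivial step is accurate, and the rest is indeed bookkeeping.
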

The process of implementing a real polynomial \(f(\ham)\) of degree \(d\) by Theorems~\ref{thm:qsp} and~\ref{thm:real_qsp} is as follows \cite{dong2021efficient}.
Let \(P^{*}(x)\) denote the complex conjugate of \(P(x)\), so that \(f = \Re(P(x)) = \frac{P(x) + P^{*}(x)}{2}\).
If \(\Re(P)\) satisfies the conditions in Theorem~\ref{thm:real_qsp}, \(P(x)\) satisfying the QSP conditions exists, and we can compute phase factors \(\phi_j\) for \(P(x)\).
They can be efficiently computed via numerical optimization with a computational cost of \(O(d^2)\) \cite[Algorithms 1 and 2]{dong2021efficient}.
From these phase factors \(\phi_j\), the corresponding phases \(\varphi_j\) for \(P(\ham)\) can be obtained using Eq.~\eqref{eq:varphi}.
The phase factors \(\varphi'_j\) for \(P^{*}(\ham)\) are given by \(\varphi'_j = -\varphi_j + \pi(1 - \delta_{jd})\) for \(j = 0, \ldots, d\).
Then, we achieve the implementation of \(f(\ham) = \Re(P(\ham))\) by a linear combination of unitaries (LCU) process of the two QSP sequences implementing \(P(\ham)\) and \(P^{*}(\ham)\).

By exploiting the relationship between \(\varphi_j\) and \(\varphi'_j\), the LCU process can be carried out without introducing an additional ancilla qubit, as shown in Fig.~\ref{fig:cqsp} 
(see \cite[Appendix B]{dong2021efficient} and \cite[Corollary 18]{gilyen2019quantum}).
Each \(\varphi'_j\) is obtained by negating \(\varphi_j\) and adding \(\pi\) when \(j < d\).
It corresponds to preparing the signal-processing register in the \(\ket{1}\) state and applying a Pauli-\(Z\) gate for each \(j < d\).
In conclusion, the circuit shown in Fig.~\ref{fig:cqsp} can implement a real polynomial \(f(\ham)\) of degree \(d\) and parity \((d \bmod 2)\), provided that \(|f| \leq 1\).

\begin{figure}[t]
    \begin{quantikz}[thin lines, row sep=0.3cm, column sep=0.3cm]
        \lstick{\(\ket{0}\)}
        & \gate{H} & \targ{} & \gate{\rme^{-\mathrm{i}\varphi_{d}\sigma_z}}   & \targ{} &                         
        & \gate{Z} & \targ{} & \gate{\rme^{-\mathrm{i}\varphi_{d-1}\sigma_z}} & \targ{} && \ \ldots\ 
        & \gate{Z} & \targ{} & \gate{\rme^{-\mathrm{i}\varphi_{0}\sigma_z}}   & \targ{} & \gate{H} & \meter{}\\
        \lstick{\(\ket{0^M}\)}
        && \octrl{-1} && \octrl{-1} & \gate[2]{U_{\ham}}  
        && \octrl{-1} && \octrl{-1} & \gate[2]{U_{\ham}} & \ \ldots\ 
        && \octrl{-1} && \octrl{-1} && \meter{}\\
        \lstick{\(\ket{\psi}\)}
        &&&&&&&&&&& \ \ldots\ &&&&&& \\
    \end{quantikz}
    \caption{
        The QSP circuit for implementing a real polynomial \(f(\ham)\) of degree \(d\) using a block-encoding \(U_{\ham}\) \cite[Fig. 16]{dong2021efficient}.
        Each signal-processing operator \(\rme^{\mathrm{i} \varphi_j U_{\Pi}}\) is realized using a gate sequence consisting of an \((M+1)\)-qubit Toffoli gate, a single-qubit \(Z\)-rotation, 
        and another \((M+1)\)-qubit Toffoli gate.
    }
    \label{fig:cqsp}
\end{figure}
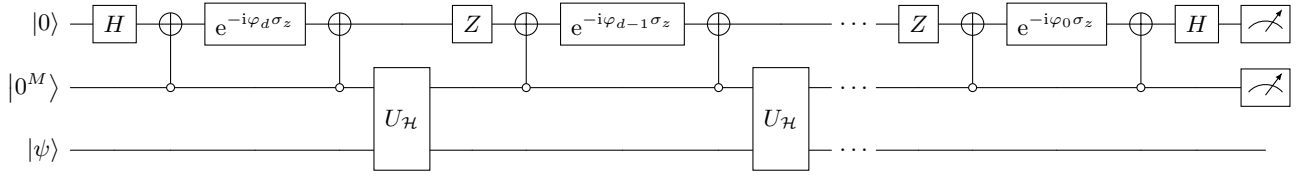

\section{Review of GQSP}
\label{appendix:review_gqsp}
In GQSP, controlled operations of unitary operators \(U \coloneqq \mathrm{e}^{\mathrm{i}\ham}\) and \(U^{\dagger} \coloneqq \mathrm{e}^{-\mathrm{i}\ham}\) are the signal operators so that a block-encoding of \(\ham\) is not used \cite{de2022fourier,motlagh2024generalized,haah2019product,berntson2025complementary}.
In this work, we we focus on the Laurent polynomials with real coefficients and adopt the Laurent formulation of QSP in Ref.~\cite{berntson2025complementary}.
Note that the Laurent polynomial approximations to sine functions may yield polynomials with purely imaginary coefficients (Appendix~\ref{appendix:sine_ver}).
As a global factor of \(\mathrm{i}\) can be factored out, we treat them as real-valued functions for circuit implementation.

Let \(L(z)\) be a Laurent polynomial of degree \(d \geq 1\) with real coefficients \(p_k \in \mathbb{R}\), defined on the unit circle \(\mathbb{T}\):
\begin{equation}
    L(z) = \sum_{k=-d}^{d} p_k z^k.
    \label{eq:laurent}
\end{equation}
The signal operator used in GQSP is given by:
\begin{equation}
    A = (\ket{0} \bra{0} \otimes U) + (\ket{1} \bra{1} \otimes U^{\dagger}) =
    \begin{bmatrix}
        U & 0 \\
        0 & U^{\dagger}
    \end{bmatrix}.
    \label{eq:lqsp_so}
\end{equation}
To implement a Laurent polynomial, specific SU(2) rotations are applied to an ancilla qubit \cite{berntson2025complementary}:
\begin{equation}
    R(\theta) = \begin{bmatrix}
        \cos(\theta) & \mathrm{i} \sin(\theta) \\
        \mathrm{i} \sin(\theta) & \cos(\theta) \\
    \end{bmatrix} \otimes \mathbb{I}.
    \label{eq:lqsp_spo}
\end{equation}
Following \cite[Theorem 6]{berntson2025complementary}, \cite[Theorem 3]{motlagh2024generalized}, and \cite{haah2019product}, we state the GQSP theorem as follows:
\begin{theorem}
    (General Quantum Signal Processing for Implementing Laurent Polynomials \cite{haah2019product}).
    \label{thm:lqsp}
    Let \(L\) be a Laurent polynomial of degree \(d \in \mathbb{N}\) with real coefficients, as given in Eq.~\eqref{eq:laurent}.
    Assume that \(L\) has parity \(d \bmod 2\), and satisfies \(|L(z)| \leq 1\) for all \(z \in \mathbb{T}\).
    Then, for all \(z \in \mathbb{T}\), there exists a complementary Laurent polynomial \(K \in \mathbb{R}[z, z^{-1}]\) 
    and a sequence of angles \((\theta_j)_{j=0}^{d} \in (-\pi,\pi]^{d+1}\) such that
    \begin{equation}
        |L(z)|^2 + |K(z)|^2 = 1,
    \end{equation}
    and
    \begin{equation}
        \begin{bmatrix}
            L(U)        & \mathrm{i} K(U)  \\
            -\mathrm{i} K(U^{\dagger}) & L(U^{\dagger}) \\
        \end{bmatrix} = 
        R(\theta_0) \left( \prod_{j=1}^{d} A \cdot R(\theta_{j}) \right).
    \end{equation}
\end{theorem}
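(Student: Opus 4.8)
The plan is to follow the two standard stages of the Laurent-polynomial QSP construction \cite{haah2019product,berntson2025complementary}: first build the complementary Laurent polynomial $K$ by a Fej\'er--Riesz-type factorization, then read off the angles $\theta_0,\dots,\theta_d$ by an inductive layer-stripping argument in the degree $d$. For the first stage, since $L$ has real coefficients one has $\overline{L(z)}=L(z^{-1})$ on $\mathbb{T}$, so $|L(z)|^2=L(z)L(z^{-1})$ is a real Laurent polynomial of degree at most $d$, invariant under $z\mapsto z^{-1}$; the parity hypothesis $L(-z)=(-1)^{d}L(z)$ makes it an even function of $z$. Hence $W(z)\coloneqq 1-L(z)L(z^{-1})$ is, by $|L|\le 1$, a nonnegative even Laurent polynomial on $\mathbb{T}$ of degree at most $d$, and viewing $W(\rme^{\mathrm{i}y})\ge 0$ as a nonnegative trigonometric polynomial, the Fej\'er--Riesz theorem produces a spectral factor: a Laurent polynomial $K$ of degree at most $d$ with $W(z)=K(z)K(z^{-1})=|K(z)|^{2}$ on $\mathbb{T}$, hence $|L(z)|^{2}+|K(z)|^{2}=1$. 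Because $W$ has real coefficients $K$ may be taken real, and the evenness of $W$ together with the parity of $L$ fixes the parity of $K$ to $d\bmod 2$ after a short coefficient count.

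For the second stage, write $A(z)=\mathrm{diag}(z,z^{-1})$ for the $2\times2$ block to which $A$ restricts on the $z$-eigenspace of $U$, and assemble the matrix-valued Laurent polynomial $\mathcal{V}(z)$ whose first row is $(L(z),\mathrm{i}K(z))$, completed to an element of $\mathrm{SU}(2)$ for every $z\in\mathbb{T}$ by the usual conjugation symmetry; this completion exists precisely because $|L|^{2}+|K|^{2}=1$, and replacing $z$ by $U$ recovers the block matrix in the statement. When $d=0$, $\mathcal{V}$ is a constant $\mathrm{SU}(2)$ element with real first column and one checks directly that $\mathcal{V}=R(\theta_0)$ for $\theta_0$ with $\cos\theta_0=L$ and $\sin\theta_0=K$ after fixing the free sign of $K$. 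For $d\ge 1$, expand $\mathcal{V}(z)=\sum_{k=-d}^{d}\mathcal{V}_{k}z^{k}$; the parity of $L$ and $K$ forces $\mathcal{V}_{k}=0$ unless $k\equiv d\pmod 2$, so only the two extreme modes $z^{\pm d}$ obstruct lowering the degree, and since each $A(z)$ contributes a rank-one leading term $z\,\mathrm{diag}(1,0)$ and a rank-one trailing term $z^{-1}\mathrm{diag}(0,1)$, the coefficients $\mathcal{V}_{d}$ and $\mathcal{V}_{-d}$ are rank-one. Matching the highest Fourier modes in $L(z)L(z^{-1})+K(z)K(z^{-1})=1$ shows that a unit vector spanning $\ker\mathcal{V}_{-d}$ and a unit vector spanning $\ker\mathcal{V}_{d}$ can be taken as, respectively, the first and second columns of some $R(\theta_d)^{-1}$; with that choice, $\mathcal{V}(z)\,R(\theta_d)^{-1}A(z)^{-1}$ is again a matrix Laurent polynomial, now of degree $d-1$, unitary on $\mathbb{T}$, and of the same structural form with entries $L^{(d-1)},K^{(d-1)}$ of parity $(d-1)\bmod 2$. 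Iterating $d$ times and invoking the base case gives angles $(\theta_j)_{j=0}^{d}$ with $\mathcal{V}(z)=R(\theta_0)\prod_{j=1}^{d}A(z)R(\theta_j)$ for all $z\in\mathbb{T}$; as both sides are finite matrix Laurent polynomials in the signal and $U=\rme^{\mathrm{i}\ham}$ is unitary with spectrum in $\mathbb{T}$, the functional calculus promotes this to the operator identity claimed.

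The main obstacle is the degree-reduction step inside the induction: proving that the single angle $\theta_d$ determined by the extreme Fourier mode of $(L,K)$ simultaneously annihilates both obstructing modes $z^{\pm d}$ of $\mathcal{V}(z)\,R(\theta_d)^{-1}A(z)^{-1}$ and leaves behind an object that still satisfies the unitarity-plus-parity hypotheses needed to recurse. This rests on the orthogonality of $\ker\mathcal{V}_d$ and $\ker\mathcal{V}_{-d}$, which is not the single leading Fourier-coefficient identity from $|L|^{2}+|K|^{2}=1$ but follows from combining it with the parity structure; the Fej\'er--Riesz step, though classical, likewise needs care to ensure the spectral factor $K$ has real coefficients and the correct parity.
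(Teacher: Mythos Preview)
The paper does not prove this theorem. It is stated in Appendix~\ref{appendix:review_gqsp} as a review result, attributed to \cite{haah2019product} (with the formulation drawn from \cite{berntson2025complementary,motlagh2024generalized}); the only additional remark is the one-sentence description that ``one first computes the polynomial $K$ from $L$ using FFT-based convolution algorithms'' and then ``the angles $\theta_j$ can be recursively obtained from the ratios of the coefficients of $L$ and $K$.'' There is therefore no in-paper proof to compare your proposal against.

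Your sketch is, in fact, precisely the argument underlying those cited references: Fej\'er--Riesz factorization of $1-|L|^2$ to produce the real complementary Laurent polynomial $K$, followed by the layer-stripping induction that reads off $\theta_d$ from the extreme Fourier modes of the unitary matrix Laurent polynomial and reduces the degree by one. The technical points you flag---that $K$ can be chosen with real coefficients and parity $d\bmod 2$, and that the single angle $\theta_d$ simultaneously kills both $z^{\pm d}$ modes because $\ker\mathcal{V}_d\perp\ker\mathcal{V}_{-d}$---are exactly the places where the original proofs in \cite{haah2019product,berntson2025complementary} do the work, and your identification of them as the nontrivial steps is accurate. So your proposal is correct and is essentially a reconstruction of the literature proof that the paper merely cites.
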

To determine the parameters \(\theta_j\), one first computes the polynomial \(K\) from \(L\) using FFT-based convolution algorithms.
The angles \(\theta_j\) can then be recursively obtained from the ratios of the coefficients of \(L\) and \(K\) (see \cite[Algorithm 1]{motlagh2024generalized}). 
The circuit implementation of GQSP is shown in Fig.~\ref{fig:lqsp}.

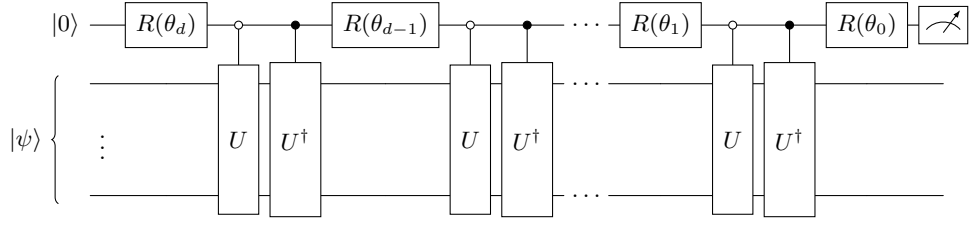
\begin{figure}[t]
    \begin{quantikz}[thin lines, row sep=0.2cm, column sep=0.15cm]
        \lstick{\(\ket{0}\)}
        & \gate{R(\theta_d)} & \octrl{1} & \ctrl{1}
        & \gate{R(\theta_{d-1})} & \octrl{1} & \ctrl{1} & \ \ldots\
        & \gate{R(\theta_{1})} & \octrl{1} & \ctrl{1} & \gate{R(\theta_{0})} & \meter{} \\
        \lstick[3]{\(\ket{\psi}\)}
        && \gate[3]{U} & \gate[3]{U^{\dagger}}
        && \gate[3]{U} & \gate[3]{U^{\dagger}} & \ \ldots\ 
        && \gate[3]{U} & \gate[3]{U^{\dagger}} && \\
        \quad \ \vdots\ \\
        &&&&&&& \ \ldots\ &&&&&
    \end{quantikz}
    \caption{
        GQSP circuit for implementing a Laurent polynomial \(L(U)\) with real coefficients.
        The rotation gate \(R(\theta)\) is defined in Eq.~\eqref{eq:lqsp_spo}.
    }
    \label{fig:lqsp}
\end{figure}

\section{Depth Analysis of GQSP Circuits Implementing Laurent Polynomials}
\label{appendix:analysis_gqsp_circ}
For the Laurent polynomial representations, their GQSP circuits involve both \(U\) and \(U^{\dagger}\).
These unitaries are implemented using the \(2v\)th-order symmetric Suzuki-Trotter decomposition \(S_{2v}(t)\) defined recursively for \(v \geq 2\):
\begin{equation}
    S_{2v}(t) \coloneqq S_{2v-2}(u_{v}t)^2\, S_{2v-2}((1-4u_v)t) S_{2v-2}(u_{v}t)^2,
    \label{eq:high_st}
\end{equation}
where \(u_v \coloneqq 1/(4 - 4^{1/(2v-1)})\), and the second-order symmetric Suzuki-Trotter formula \(S_2(t)\) is given by
\begin{equation}
    S_{2}(t) \coloneqq 
    \rme^{\eta\ham_0} \rme^{\eta\ham_1} \cdots \rme^{\eta\ham_{J-1}} 
    \rme^{\eta\ham_{J-1}} \cdots \rme^{\eta\ham_1} \rme^{\eta\ham_0},
\end{equation}
with \(\eta = -\mathrm{i}t/2\) \cite{suzuki1991general}.
Theorem 6 in Ref.~\cite{childs2021theory} 
states that the additive and multiplicative Trotter errors \(\mathcal{E}_{\text{ST}}\) of \(S_{2v}(t)\) for a normalized Hamiltonian are both upper bounded by \(\mathcal{E}_{\text{ST}} = O(t^{2v+1})\).
To reduce the Trotter errors, \(S_{2v}(t)\) is fragmented with \(r\) repetitions of \(S_{2v}(t/r)\), where \(r\) is the number of Trotter steps. 

A Laurent polynomial of degree \(d_2\) introduces \(O(d_2^2)\) additive and multiplicative Trotter errors.
Therefore, for \(t = 1\), the number of Trotter steps \(r\) should be chosen as
\begin{equation}
    r = O\left(\frac{d_2^{1/v}}{\delta^{1/(2v)}}\right),
\end{equation}
to ensure that the total Trotter error in the \(d_2\)-degree Laurent polynomial is upper bounded by \(O(\delta)\) \cite[Corollary 12]{childs2021theory}.

\(S_{2v}(1/r)\) consists of a sequence of \(2 \cdot 5^{v-1}\) applications of \(S_2\), as defined in Eq.~\eqref{eq:high_st}, each of which is decomposed into \(O(Jk)\) CNOTs \cite{nielsen2010quantum}.
Therefore, the circuit depth of a GQSP sequence for a degree-\(d_2\) Laurent polynomial is
\begin{equation}
    O(d_2 \cdot r \cdot 2 \cdot 5^{v-1} \cdot Jk) = O\left(d_2^{1+1/v} \cdot D_{\text{ST}} \right),
\end{equation}
where we define \(D_{\text{ST}} \coloneqq 5^{v-1} Jk/\delta^{1/(2v)}\).

\end{document}